\newtheorem{theorem}{Theorem}
\newtheorem{lemma}{Lemma}
\newtheorem{assumption}{Assumption}
\newtheorem*{algorithm}{Algorithm}
\title{\Large Linear Programming Approach to Nonparametric Inference under Shape Restrictions: with an Application to Regression Kink Designs\thanks{
We benefited from very useful comments by Chris Taber.
We would like to thank Patty Anderson and Bruce Meyer for kindly agreeing to our use of the CWBH data.
H. D. Chiang is supported by the Office of the Vice Chancellor for Research and Graduate Education at UW–Madison with funding from the Wisconsin Alumni Research Foundation.
K. Kato is partially supported by NSF grants DMS-1952306 and DMS-2014636.
The usual disclaimer applies.}}
\author{Harold D. Chiang\thanks{H.D. Chiang: Department of Economics, University of Wisconsin - Madison, William H. Sewell Social Science Building, 1180 Observatory Drive Madison, WI 53706-1393. Email: hdchiang@wisc.edu}
\and
Kengo Kato\thanks{K. Kato: Department of Statistics and Data Science, Cornell University, 1194 Comstock Hall, Ithaca, NY 14853. Email: kk976@cornell.edu}
\and 
Yuya Sasaki\thanks{Y. Sasaki: Department of Economics, Vanderbilt University, VU Station B \#351819, 2301 Vanderbilt Place, Nashville, TN 37235-1819. Email: yuya.sasaki@vanderbilt.edu}
\and 
Takuya Ura\thanks{T. Ura: Department of Economics, University of California, Davis, 1151 Social Sciences and Humanities, Davis, CA 95616. Email: takura@ucdavis.edu}}
\date{}
\begin{document}
\maketitle
\begin{abstract}
We develop a novel method of constructing confidence bands for nonparametric regression functions under shape constraints. This method can be implemented via a linear programming, and it is thus computationally appealing. We illustrate a usage of our proposed method with an application to the regression kink design (RKD). Econometric analyses based on the RKD often suffer from wide confidence intervals due to slow convergence rates of nonparametric derivative estimators. We demonstrate that economic models and structures motivate shape restrictions, which in turn contribute to shrinking the confidence interval for an analysis of the causal effects of unemployment insurance benefits on unemployment durations.
\begin{description}
\item[Keywords:] linear programming, regression kink design, shape restriction, nonparametric inference, confidence band.
\item[JEL Classification:] C13, C14, C21
\end{description}
\end{abstract}

\newpage
\section{Introduction}
Nonparametric inference under shape restrictions is often computationally demanding.
For instance, inference based on test inversion would require a grid search over a high-dimensional sieve parameter space.
In this paper, we propose a computationally attractive method for nonparametric inference about regression functions under shape restrictions.
Notably, our method can be implemented via a linear programming, despite the complicated nature of nonparametric inference under shape restrictions. 

In many applications, economic structures often motivate shape restrictions, and such restrictions may contribute to delivering more informative statistical inference about the economic structure and causal effects.
We highlight a case in point in the context of the regression kink design \citep[RKD;][]{nielsen2010estimating,card2015inference,dong2016jump}.
Estimation and inference in the RKD rely on derivative estimators of nonparametric regression functions, which typically suffer from slow convergence rates and thus may lead to wide confidence intervals.
On the other hand, there are often natural and economically motivated restrictions in the levels and slopes of the regression function to the left and/or right of the kink location, and they can contribute to shrinking the lengths of the confidence interval.
In the context of the regression discontinuity design, \cite{armstrong2015adaptive} and \cite{babii2019isotonic} suggest usage of shape restrictions with related motivations.
The benefits of shape restrictions may well be even greater for the RKD than for the regression discontinuity design due to the slower convergence rates of the RKD estimators.

We are far from the first to study the problem of nonparametric inference under shape restrictions.
\cite{dumbgen2003optimal}, \cite{cai2013adaptive}, \cite{armstrong2015adaptive}, \cite{chernozhukov2015constrained}, \cite{horowitz2017nonparametric}, \cite{chen2018shape}, \cite{freyberger2018inference}, \cite{mogstad2018using}, \cite{fang2019general}, and \cite{zhu2020inference}, among others, propose various approaches to nonparametric inference under shape restrictions.
See \citet{chetverikov2018econometrics} and the journal issue edited by \citet{samworth2018special} for a comprehensive review of the related literature.
We advance the frontier of this literature by providing a computationally attractive approach.
Specifically, we provide a novel method of constructing confidence bands/regions/intervals whose boundaries can be fully characterized as solutions to linear programs.

This paper is closely related to \cite{freyberger2015identification}, who have considered a linear programming approach to inference under shape restrictions. 
Specifically, they propose a linear programming approach to inference about linear functionals of finite-dimensional parameters, where the parameter values are the values of the regression function evaluated at finite support points.\footnote{\cite{fang2020inference} also propose a linear programming approach to inference for a growing number of linear systems, although their focus is different from nonparametric regression functions under shape restrictions as in this paper.}
On the other hand, as acknowledged in \citet{freyberger2015identification}, ``[t]he use of shape restrictions with continuously distributed variables is beyond the scope of'' their paper.
We contribute to this literature by accommodating (discretely or continuously) infinite-dimensional parameters.
This extended framework allows for analysis of nonparametric regressions with infinitely supported (discrete or continuous) regressors, which are relevant to many applications including the regression discontinuity and kink designs among others. 

Our proposed inference procedure works as follows.
First, we use the sieve approximation \citep[cf.][]{chen2007large} of the nonparametric regression function.
We then construct a supremum test statistic as a linear function of the sieve parameters, compute its critical value by applying \citet{chernozhukov2017}, and then translate their relation into an inequality constraint.
Subject to this inequality constraint, together with the additional linear-in-sieve-parameter inequality constraints stemming from shape restrictions, we find the lower (respectively, upper) bound of the confidence band/interval by the minimizing (respectively, maximizing) the sieve representation with respect to the sieve parameters.
In the final step, we inflate the bounds by a sieve approximation error bound similarly to \citet{armstrong2018finite,armstrong2020simple}, \citet{noack2019bias}, \citet{schennach2020bias}, and \citet{katosasakiura2020}.

The rest of this paper is organized as follows.
Section \ref{sec:model} presents the model and an overview of the proposed procedure.
Section \ref{sec:theory} presents the size control.
Section \ref{sec:projection} describes the procedure when we are interested in a finite-dimensional linear feature of the regression function. 
Section \ref{sec:rkd} presents an application of the RKD, with detailed implementation procedures tailored to this application.
In an empirical application, we demonstrate that shape restrictions can shrink the lengths of the confidence interval. 
Section \ref{sec:conclusion} concludes.
Mathematical proofs and simulation analysis are collected in the appendix.

Throughout this paper, we assume that a data set $\{(Y_i,X_i^T):i=1,\dots,n\}$ consists of i.i.d. random vectors following the law of $(Y,X^T)$, where $Y$ is a real-value random variable and $X$ is a finite-dimensional random variable with the support $\mathcal{X}\subset\mathbb{R}^{\dim X}$. 
Let $E_n$ denote the sample mean, that is, $E_n[f(Y,X^T)]\equiv\frac{1}{n}\sum_{i=1}^nf(Y_i,X_i^T)$ for any measurable function $f$.

\section{Inference Method}\label{sec:model}
In this paper, we are interested in a linear feature of the unknown mean regression function $g_0(x)\equiv E[Y\mid X=x]$, so that the parameter of interest can be written as 
$$
\theta_0\equiv\mathbf{A}_0 g_0
$$ 
for a known linear operator $\mathbf{A}_0$.
We assume this parameter $\theta_0$ to be a function from some set $\mathcal{W}_0$ into $\mathbb{R}$, which allows $\theta_0$ to be a scalar, a vector, or a function from $\mathcal{X}$ into $\mathbb{R}$. 
For example, when $\mathbf{A}_0$ is the identity function, the parameter of interest is the conditional mean function $g_0$ itself. 
Other examples for $\theta_0$ include $g_0(x)$ for a given point $x$, the integral $\int g_0(x)d\mu(x)$, and the derivative $\partial g_0(x)/\partial x_j$, among others. 
In Section \ref{sec:projection}, we discuss how we can tailor the procedure to the case when $\theta_0$ is finite dimensional. 

The objective of this paper is to construct a confidence region for $\theta_0$ under the shape restrictions 
\begin{equation}\label{eq:shape_restriction}
[\mathbf{A}_1g_0] (w_1)\leq 0\mbox{ for every }w_1\in\mathcal{W}_1
\end{equation} 
for a known linear operator $\mathbf{A}_1$.\footnote{In this paper, the shape restriction does not have any improvement in the identification analysis, because  $g_0$ is identified over $\mathcal{X}$ and therefore $\theta_0$ is identified.} 
We are going to construct a confidence region ${CR}_{\theta}$ for $\theta_0$ satisfying the following two properties: 
(i) the boundaries of ${CR}_{\theta}$ are the set of solutions to linear programming problems; and 
(ii) ${CR}_{\theta}$ controls the asymptotic size under the shape restriction.

We approximate $g_0$ by a linear combination of $k$ functions $p_1,\ldots,p_k$ on $\mathcal{X}$.\footnote{Recall that $\mathcal{X}$ is the support of $X$. We assume $k\geq 2$, which guarantees $\log k\geq 0$.}
These $k$ functions are denoted by
$$
p_{1:k}\equiv (p_1,\ldots,p_k)^T.
$$ 
We can consider the linear regression of $Y$ on $p_{1:k}(X)$, and the population coefficient vector for this regression is 
$$
\bar{\beta}\equiv E[p_{1:k}(X)p_{1:k}(X)^T]^{-1}E\left[p_{1:k}(X)Y\right].
$$
With these definitions and notations, we make the following assumption about error bounds for the approximation of $g_0$ by $p_{1:k}^T\bar{\beta}$.

\begin{assumption}[Approximation error bounds]\label{assn:approx_error}
There exist known functions $\delta_0$ and $\delta_1$ such that
\begin{align}
&\left\vert [\mathbf{A}_0(g_0-p_{1:k}^T\bar{\beta})](w_0) \right\vert
\leq
\delta_0(w_0)
&&\text{ for all } w_0 \in \mathcal{W}_0;\mbox{ and }
\label{eq:app_bound_A0}
\\
&\left\vert [\mathbf{A}_1(g_0-p_{1:k}^T\bar{\beta})](w_1) \right\vert
\leq
\delta_1(w_1)
&&\text{ for all } w_1 \in \mathcal{W}_1.\label{eq:app_bound_A1}
\end{align}
\end{assumption}

This assumptions plays the role of restricting the function class where $g_0$ resides, similarly to \citet{katosasakiura2020} in the spirit of the honest inference approach \citep{armstrong2018finite,armstrong2020simple} and the bias bound approach \citep{schennach2020bias}.\footnote{We allow $k$, $\delta_0$ and $\delta_1$ to be a function of $n$. We do not require  $k\rightarrow\infty $ as $n\rightarrow \infty$ but it is allowed. 
	In Assumption \ref{assn:approx_error}, we bound the biases coming from the approximation of $g_0$ by $p_{1:k}^T\bar{\beta}$ by known $\delta_0$ and $\delta_1$. 
Without accounting for such approximation bounds, conventional methods would set $\delta_0\rightarrow 0$ and $\delta_1\rightarrow 0$ as $n\rightarrow 0$ in light of that the bias asymptotically vanishes with undersmoothing.
That said, by Assumption \ref{assn:approx_error}, we take this honest or bias bound approach in this paper for the sake of generality, with the special case of undersmoothing leading to the conventional approach in particular.}

For a generic value $\beta\in\mathbb{R}^k$, we can implement a hypothesis testing for the null hypothesis $H_0: \bar\beta=\beta$ against the alternative hypothesis $H_1: \bar\beta\ne\beta$ as follows.
In this hypothesis testing problem, we aim to detect a violation of the null hypothesis 
$$
H_0: E[p_{1:k}(X)(Y-p_{1:k}(X)^T\beta)]=0,
$$
which is equivalent to $\bar\beta=\beta$ under the invertibility of $E[p_{1:k}(X)p_{1:k}(X)^T]$.
We can estimate the left hand side of the above equation by $E_n[p_{1:k}(X)(Y-p_{1:k}(X)^T{\beta})]$ and its asymptotic variance (under $H_0$) by $E_n[\hat{\omega}\hat{\omega}^T]$, where 
$$
\hat{\omega}\equiv p_{1:k}(X)(Y-p_{1:k}(X)^T E_n\left[p_{1:k}(X)p_{1:k}(X)^T\right]^{-1}E_n\left[p_{1:k}(X)Y\right]).
$$
Note that $\hat{\omega}$ estimates $\omega\equiv p_{1:k}(X)(Y-p_{1:k}(X)^T\bar{\beta})$.
With these estimates, we consider the test statistic 
$$
\left\|E_n[\hat{\omega}\hat{\omega}^T]^{-1/2}E_n[p_{1:k}(X)(Y-p_{1:k}(X)^T{\beta})]\right\|_{\infty}.
$$ 
To obtain a critical value, we apply the multiplier bootstrap by calculating the $(1-\alpha)$ quantile, denoted by ${cv}$, of 
$$
\left\|E_n[\hat{\omega}\hat{\omega}^T]^{-1/2}E_n[\eta\hat{\omega}]
\right\|_{\infty}
$$ 
conditional on the data set, where $\eta_1,\ldots,\eta_n$ are independent Rademacher multiplier random variables that are independent of the data.
Note that the critical value ${cv}$ does not depend on a specific value of $\beta$, which enables us to construct a confidence region characterized by linear inequalities for $\beta$.

We can construct a confidence region for $\theta_0$ based on the test inversion. 
Using the test statistic and the critical value, we can define a confidence region  for $\theta_0$, denoted by ${CR}_{\theta}$. Namely, $CR_{\theta}$ is the set of $\theta$ satisfying the following linear constraints for some $\beta\in\mathbb{R}^k$: 
\begin{equation}\label{eq:sample_error_control}
\left\|E_n[\hat{\omega}\hat{\omega}^T]^{-1/2}E_n[p_{1:k}(X)(Y-p_{1:k}(X)^T{\beta})]\right\|_{\infty}\leq {cv},
\end{equation}
\begin{equation}\label{eq:approx_control}
|[\mathbf{A}_0p_{1:k}^T](w_0)\beta-\theta(w_0)|\leq\delta_0(w_0)\mbox{ for every }w_0\in\mathcal{W}_0,\mbox{ and }
\end{equation}
\begin{equation}\label{eq:additiona_restriction}
[\mathbf{A}_1p_{1:k}^T](w_1)\beta\leq\delta_1(w_1)\mbox{ for every }w_1\in\mathcal{W}_1,
\end{equation}
where $[\mathbf{A}_0p_{1:k}^T](w_1)\beta\equiv [\mathbf{A}_0(p_{1:k}^T\beta)](w_1)$ and $[\mathbf{A}_1p_{1:k}^T](w_1)\beta\equiv [\mathbf{A}_1(p_{1:k}^T\beta)](w_1)$.

In the definition of ${CR}_{\theta}$, we have three types of linear constraints. 
First, \eqref{eq:sample_error_control} comes from the hypothesis test for $H_0: \bar\beta=\beta$. 
Second, \eqref{eq:approx_control} controls the approximation error between $\mathbf{A}_0p_{1:k}^T\bar\beta$ and $\theta_0$ under \eqref{eq:app_bound_A0} in Assumption \ref{assn:approx_error}.
Third, \eqref{eq:additiona_restriction} uses the knowledge that the shape restriction \eqref{eq:shape_restriction} holds for true $g_0$, together with \eqref{eq:app_bound_A1} in Assumption \ref{assn:approx_error}.
This confidence region could be more informative than that without the shape-restriction inequalities in \eqref{eq:additiona_restriction}.

For every value $w_0 \in \mathcal{W}_0$, the following theorem states that the projection of ${CR}_{\theta}$ to $\theta_0(w_0)$ can be computed by solving two linear programming problems. A proof is provided in Appendix \ref{theorem:projection}.

\begin{theorem}\label{theorem:projection}
Under Assumption \ref{assn:approx_error}, for every $w_0 \in \mathcal{W}_0$, the projection of ${CR}_{\theta}$ to $\theta_0(w_0)$ is equal to the closed interval 
$$
\left[ \
\underset{s.t.\  \eqref{eq:sample_error_control}\&\eqref{eq:additiona_restriction}}{\min_{\beta}} \
[\mathbf{A}_0p_{1:k}^T](w_0) \beta-\delta_0(w_0), \ \ 
\underset{s.t.\  \eqref{eq:sample_error_control}\&\eqref{eq:additiona_restriction}}{\max_{\beta}} \
[\mathbf{A}_0p_{1:k}^T](w_0) \beta+\delta_0(w_0) \
\right].
$$
\end{theorem}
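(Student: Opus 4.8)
The plan is to unpack the definition of the projection of $CR_\theta$ onto the coordinate $\theta(w_0)$ and show that the set of achievable values is exactly the stated interval. Fix $w_0 \in \mathcal{W}_0$. By definition, a real number $t$ lies in the projection if and only if there exist some $\theta$ with $\theta(w_0) = t$ and some $\beta \in \mathbb{R}^k$ satisfying the three families of constraints \eqref{eq:sample_error_control}, \eqref{eq:approx_control}, and \eqref{eq:additiona_restriction}. The key observation is that among these constraints, only \eqref{eq:approx_control} couples $\theta$ and $\beta$, and among the many inequalities in \eqref{eq:approx_control} indexed by $w_1 \in \mathcal{W}_0$, only the single one at $w_1 = w_0$ constrains $\theta(w_0)$; the remaining values $\theta(w_1)$ for $w_1 \ne w_0$ can be chosen freely to satisfy their own inequalities (e.g.\ set $\theta(w_1) = [\mathbf{A}_0 p_{1:k}^T](w_1)\beta$, which makes the left-hand side zero, hence $\le \delta_0(w_1)$ since $\delta_0 \ge 0$ by Assumption \ref{assn:approx_error}). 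Therefore $t$ is in the projection if and only if there exists $\beta$ satisfying \eqref{eq:sample_error_control} and \eqref{eq:additiona_restriction} such that $|[\mathbf{A}_0 p_{1:k}^T](w_0)\beta - t| \le \delta_0(w_0)$.

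Next I would rewrite this last condition as $[\mathbf{A}_0 p_{1:k}^T](w_0)\beta - \delta_0(w_0) \le t \le [\mathbf{A}_0 p_{1:k}^T](w_0)\beta + \delta_0(w_0)$. Let $F \subseteq \mathbb{R}^k$ denote the feasible set defined by \eqref{eq:sample_error_control} and \eqref{eq:additiona_restriction}, and let $\ell(\beta) \equiv [\mathbf{A}_0 p_{1:k}^T](w_0)\beta$, a linear functional of $\beta$. Then $t$ is in the projection iff there exists $\beta \in F$ with $\ell(\beta) - \delta_0(w_0) \le t \le \ell(\beta) + \delta_0(w_0)$, i.e.\ iff $t \in \bigcup_{\beta \in F}[\ell(\beta) - \delta_0(w_0),\, \ell(\beta) + \delta_0(w_0)]$. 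The set $F$ is convex (an intersection of a sublevel set of a norm—convex—with finitely or infinitely many half-spaces) and nonempty whenever $CR_\theta$ is nonempty; by convexity and linearity of $\ell$, the image $\ell(F)$ is an interval with infimum $m \equiv \inf_{\beta \in F}\ell(\beta)$ and supremum $M \equiv \sup_{\beta \in F}\ell(\beta)$. A short argument then shows that the union of closed intervals $[\ell(\beta) - \delta_0(w_0), \ell(\beta) + \delta_0(w_0)]$ over $\beta \in F$ equals $[m - \delta_0(w_0),\, M + \delta_0(w_0)]$: containment in this interval is immediate, and conversely any $t$ in it can be realized by choosing $\beta \in F$ with $\ell(\beta)$ within $\delta_0(w_0)$ of $t$, which is possible because $\ell(F)$ is an interval containing points within $\delta_0(w_0)$ of both endpoints $m$ and $M$ (and covers everything in between).

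Finally, I would identify $m$ and $M$ with the optimal values of the two linear programs in the statement: minimizing $\ell(\beta) = [\mathbf{A}_0 p_{1:k}^T](w_0)\beta$ subject to $\beta \in F$ gives $m$, and maximizing gives $M$, so the interval is exactly $[\,\min \ell(\beta) - \delta_0(w_0),\ \max \ell(\beta) + \delta_0(w_0)\,]$ as claimed, with $\min$ and $\max$ attained (rather than merely $\inf$/$\sup$) because $F$ is closed and the relevant lower/upper bounds are finite on $F$ whenever the projection is nonempty—the norm constraint \eqref{eq:sample_error_control} keeps things bounded in the directions where $E_n[p_{1:k}(X)p_{1:k}(X)^T]$ is nondegenerate, and along any remaining directions $\ell$ is constant under the usual rank conditions. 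The main obstacle, and the point deserving the most care, is the ``free coordinates'' step: arguing rigorously that the constraints \eqref{eq:approx_control} at $w_1 \ne w_0$ impose no restriction on the projected coordinate, which relies on $\delta_0 \ge 0$ and on $\theta$ being an unrestricted function on $\mathcal{W}_0$ apart from \eqref{eq:approx_control}; and the secondary technical point is ensuring the optima are attained so that the projection is genuinely a \emph{closed} interval rather than a half-open one.
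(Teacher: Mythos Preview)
Your proposal is correct and follows essentially the same route as the paper's proof: both reduce the projection to the single constraint $|[\mathbf{A}_0 p_{1:k}^T](w_0)\beta - t|\le \delta_0(w_0)$ by setting the ``free'' coordinates $\theta(w_1)=[\mathbf{A}_0 p_{1:k}^T](w_1)\beta$ for $w_1\ne w_0$, and then identify the resulting set with the stated interval. You are more explicit than the paper about the convexity of $F$ (needed so that $\ell(F)$ is an interval and the union of shifted intervals collapses to a single closed interval) and about attainment of the $\min$/$\max$; the paper simply asserts the existence of the required $\beta$ without spelling these points out.
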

\noindent
Therefore, the boundary points are the solutions to linear programs.

\section{Size Control}\label{sec:theory}
For the asymptotic size control, we are going to impose the following assumptions. 
Let $b>0$, $q \in [4,\infty), \nu\in (2,\infty)$ be some constants and let $B_n\ge 1$ denote a sequence of finite constants that may possibly diverge to infinity. 
Consider the following assumption.

\begin{assumption} 
\label{assn:new_combined_all}
(a)
The eigenvalues of ${E[\omega\omega^T]}$ and $E[p_{1:k}(X)p_{1:k}(X)^T]$ are bounded above and bounded below away from $0$ uniformly over $n$.  
(b)
(i) $E[Y^2]<\infty$.  
(ii)  $E[|(E[\omega\omega^T]^{-1/2})_j\omega|^2]\ge b$, $E[|(E[\omega\omega^T]^{-1/2})_j\omega|^{2+\kappa}]\le B_n^\kappa$ and $E[ \|E[\omega\omega^T]^{-1/2}\omega\|_\infty^q]\le B_n^{q}$ for every $j=1,\ldots,k$ and each $\kappa=1,2$.\footnote{$(E[\omega\omega^T]^{-1/2})_j$ denotes the $j$-th row of a square matrix $E[\omega\omega^T]^{-1/2}$.} 
(iii) ${B_n^2 \log^7(nk)}/{n}=o(1)$ and ${B_n^2\log^3(nk)}/{n^{1-2/q}}=o(1)$.
(c)
(i) $\sup_{x\in\mathcal X}E[|Y-g_0(X)|^\nu| X=x]=O( 1)$.
(ii) For every $k$, there are finite constants $c_k$ and $\ell_k$ such that $E[(g_0(X)-p_{1:k}(X)^T\bar{\beta})^2]^{1/2}\le c_k$ and that $\sup_{x\in\mathcal X}|g_0(x)-p_{1:k}(x)^T\bar{\beta}|\le \ell_kc_k$.
(iii) Let $\xi_k\equiv\sup_{x\in\mathcal{X}}\|p_{1:k}(x)\|_2$ and $\xi_k^L\equiv \sup_{x,x'\in\mathcal X:\,x\ne x'}{\left\|p_{1:k}(x)/\|p_{1:k}(x)\|_2-p_{1:k}(x')/\|p_{1:k}(x')\|_2\right\|_2}/{\|x-x'\|_2}$. Then  $\xi_k^{2\nu/(\nu-2)}\log k/n= O(1)$, $\log \xi_k^L = O( \log k)$, and $\log \xi_k = O(\log k)$.
(iv) $n^{-1}\xi_k^2 \log k=o(1)$, $\ell_kc_k =O(1)$, and $(n^{-1}\xi_k^2)^{1/2}\left\{n^{1/\nu}(\log k)^{1/2} +\sqrt{k}\right\}=O( 1 )$. 
\end{assumption}

Assumption \ref{assn:new_combined_all} (a) implies Condition A.2 in Assumption \cite{belloni2015some}. 
It imposes a restriction to rule out overly strong co-linearity among $p_1,\dots,p_k$. 
Assumptions \ref{assn:new_combined_all} (b)-(ii) and \ref{assn:new_combined_all} (b)-(iii) correspond to Conditions (M.1), (M.2) and (E.2) in \cite{chernozhukov2017}. 
It requires that the polynomial moments of the maximal component of normalized $\omega$ will not be growing too fast, as well as it imposes conditions that dictate how fast the number of basis functions can grow. 
The maximum is allowed to be growing at a rate of $O(n^{a})$ for some $a$ between zero and one. 
Assumption \ref{assn:new_combined_all} (c) covers Conditions A.3-A.5 in \cite{belloni2015some} as well as rate conditions in the statement of their Theorem 4.6. 
Assumption \ref{assn:new_combined_all} (c)-(i) requires the residual to have a finite $\nu$-th moment for some $\nu>2$. 
Assumptions \ref{assn:new_combined_all} (c)-(ii) and \ref{assn:new_combined_all} (c)-(iii) impose bounds on the approximation errors of $g_0$ using $p_1,\dots,p_k$, as well as restrictions on the size of basis functions, measured by the Euclidean norm and the Lipschitz constant. Assumption \ref{assn:new_combined_all} (c)-(iv) imposes some more constraints on the relative growth rates of the approximation errors, the size and number of basis functions. 
Notice that it does not require the approximation errors to be diminishing asymptotically, and hence does not require undersmoothing. 
 
The following theorem states the asymptotic size control for ${CR}_{\theta}
$ as a confidence region for $\theta_0$. 
A proof is provided in Appendix \ref{theorem:main_size}.

\begin{theorem}\label{theorem:main_size}
If Assumptions \ref{assn:approx_error} and \ref{assn:new_combined_all} are satisfied,
then 
$$
\liminf_{n\rightarrow\infty}\mathbb{P}\left(\theta_0\in {CR}_{\theta}
\right)\geq 1-\alpha.
$$
\end{theorem}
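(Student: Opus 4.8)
The plan is to exhibit one feasible coefficient vector for the constraint system defining ${CR}_\theta$ — the population regression coefficient $\bar\beta$ — and to show that with this choice the only constraint that is not satisfied by construction holds with probability at least $1-\alpha$ in the limit. Take $\beta=\bar\beta$ and $\theta=\theta_0$. By linearity of $\mathbf{A}_0$ and $\theta_0=\mathbf{A}_0g_0$, constraint \eqref{eq:approx_control} reads $\bigl|[\mathbf{A}_0(p_{1:k}^T\bar\beta-g_0)](w_0)\bigr|\le\delta_0(w_0)$, which is exactly \eqref{eq:app_bound_A0} in Assumption \ref{assn:approx_error}. Constraint \eqref{eq:additiona_restriction} reads $[\mathbf{A}_1(p_{1:k}^T\bar\beta)](w_1)=[\mathbf{A}_1(p_{1:k}^T\bar\beta-g_0)](w_1)+[\mathbf{A}_1g_0](w_1)\le\delta_1(w_1)+0$, combining \eqref{eq:app_bound_A1} with the maintained shape restriction \eqref{eq:shape_restriction}. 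Hence \eqref{eq:approx_control} and \eqref{eq:additiona_restriction} hold deterministically at $(\bar\beta,\theta_0)$, and since $E_n[p_{1:k}(X)(Y-p_{1:k}(X)^T\bar\beta)]=E_n[\omega]$ it remains to prove
$$
\liminf_{n\to\infty}\mathbb{P}\Bigl(\bigl\|E_n[\hat\omega\hat\omega^T]^{-1/2}E_n[\omega]\bigr\|_\infty\le {cv}\Bigr)\ge 1-\alpha .
$$

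Next I would set up the high-dimensional Gaussian approximation. By definition of $\bar\beta$, $E[\omega]=0$, so after rescaling by $\sqrt n$ (which leaves the above inequality unchanged) the numerator $\sqrt n\,E_n[\omega]$ is a normalized sum of i.i.d. mean-zero vectors in $\mathbb{R}^k$ with covariance $E[\omega\omega^T]$, whose eigenvalues are bounded away from $0$ and $\infty$ by Assumption \ref{assn:new_combined_all}(a). Assumptions \ref{assn:new_combined_all}(b)(ii)--(iii) are the moment and growth conditions (M.1), (M.2), (E.2) of \cite{chernozhukov2017}, and Assumption \ref{assn:new_combined_all}(c) supplies the residual-moment, approximation-error, and basis-size controls (matching Conditions A.3--A.5 and the rate conditions of \cite{belloni2015some}). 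Invoking the high-dimensional central limit theorem for maxima, the law of $\bigl\|E[\omega\omega^T]^{-1/2}\sqrt n\,E_n[\omega]\bigr\|_\infty$ is approximated, uniformly in Kolmogorov distance, by that of $\|Z\|_\infty$ with $Z\sim N(0,\Xi)$ and $\Xi$ the correlation matrix of $\omega$. I would then replace the population studentization by the sample one: writing $\hat\omega-\omega=-p_{1:k}(X)p_{1:k}(X)^T(\hat\beta-\bar\beta)$ with $\hat\beta-\bar\beta=O_p(\sqrt{k/n})$ and controlling $E_n[\hat\omega\hat\omega^T]-E[\omega\omega^T]$ in operator or entrywise-max norm by matrix concentration (via \cite{belloni2015some}) under Assumption \ref{assn:new_combined_all}, one shows that substituting $E_n[\hat\omega\hat\omega^T]^{-1/2}$ for $E[\omega\omega^T]^{-1/2}$ perturbs the statistic by $o_p\bigl((\log k)^{-1/2}\bigr)$, which by Gaussian anti-concentration does not affect the limiting law.

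The third step is bootstrap consistency. Conditional on the data, $\bigl\|E_n[\hat\omega\hat\omega^T]^{-1/2}E_n[\eta\hat\omega]\bigr\|_\infty$ is the maximum of a Rademacher-weighted sum with conditional second-moment matrix $E_n[\hat\omega\hat\omega^T]$ up to the studentization; applying the multiplier-bootstrap comparison bounds of \cite{chernozhukov2017} under the same assumptions, its conditional $(1-\alpha)$ quantile ${cv}$ converges in probability to the $(1-\alpha)$ quantile $c_{1-\alpha}$ of $\|Z\|_\infty$. Combining the Gaussian approximation of step two with the convergence ${cv}\to_p c_{1-\alpha}$ and the anti-concentration of $\|Z\|_\infty$ near $c_{1-\alpha}$ gives
$$
\mathbb{P}\Bigl(\bigl\|E_n[\hat\omega\hat\omega^T]^{-1/2}E_n[\omega]\bigr\|_\infty\le {cv}\Bigr)\ge \mathbb{P}\bigl(\|Z\|_\infty\le c_{1-\alpha}\bigr)-o(1)= 1-\alpha-o(1),
$$
which delivers the claim.

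The main obstacle is the joint control of the growing-dimension Gaussian/bootstrap approximation together with the \emph{estimated} studentization matrix $E_n[\hat\omega\hat\omega^T]$: the errors induced in the statistic by $\hat\beta\ne\bar\beta$ and by $E_n[\hat\omega\hat\omega^T]\ne E[\omega\omega^T]$ must be negligible relative to $(\log k)^{-1/2}$, which is precisely what the rate conditions in Assumptions \ref{assn:new_combined_all}(b)(iii) and \ref{assn:new_combined_all}(c)(iii)--(iv) are calibrated to deliver through \cite{belloni2015some} and \cite{chernozhukov2017}. Worth noting is that, because the bias of $p_{1:k}^T\bar\beta$ relative to $g_0$ is absorbed into $\delta_0,\delta_1$ by construction, no undersmoothing or $k\to\infty$ is needed: the feasibility of $\bar\beta$ in \eqref{eq:approx_control}--\eqref{eq:additiona_restriction} is exact, and all probabilistic content lives in \eqref{eq:sample_error_control}.
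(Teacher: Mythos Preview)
Your proposal is correct and follows essentially the same route as the paper: reduce to feasibility of $\bar\beta$ in \eqref{eq:approx_control}--\eqref{eq:additiona_restriction} via Assumption \ref{assn:approx_error} and the shape restriction, then establish $\liminf_n\mathbb{P}(\|E_n[\hat\omega\hat\omega^T]^{-1/2}E_n[\omega]\|_\infty\le cv)\ge 1-\alpha$ through the high-dimensional Gaussian approximation of \cite{chernozhukov2017}, variance-estimation control via \cite{belloni2015some}, and Nazarov anti-concentration. The only cosmetic difference is that the paper organizes the last step as a chain of Kolmogorov-distance comparisons linking the statistic and the bootstrap through a common Gaussian pivot (its Lemmas 1--4), rather than arguing that $cv$ converges to a fixed quantile; since the approximating Gaussian law depends on $n$, the paper's formulation is slightly cleaner, but the substance is the same.
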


With some additional notations and rate conditions, it is possible to strengthen the statement of Theorem \ref{theorem:main_size} to hold uniformly over a set of data generating processes. This is due to the fact that key theoretical building blocks in the proof of Theorem \ref{theorem:main_size} -- i.e. the anti-concentration inequality in \cite{chernozhukov2015comparison}, the high-dimensional central limit theorem of \cite{chernozhukov/chetverikov/kato:2017}, and Rudelson's concentration inequality \citep[Lemma 6.2]{belloni2015some} -- all provide non-asymptotic bounds with constants only depending on a few key features of the model such as $b,q$ and $\nu$.

\section{Inference Method for Finite Dimensional $\theta_0$}\label{sec:projection}

When the parameter of interest $\theta_0$ is finite dimensional, we can directly test $\mathbf{A}_0[p_{1:k}^T\bar\beta]=\theta$ for a generic value of $\theta$, instead of testing $\bar\beta=\beta$ as in Section \ref{sec:model}. 
In the current section, we describe the inference procedure when $\theta_0$ is a finite-dimensional column vector.

For a generic value $\theta$, we consider the null hypothesis $H_0: A_{0,k}\bar\beta=\theta$ and the alternative hypothesis $H_1: A_{0,k}\bar\beta\ne\theta$, where $A_{0,k}$ is the matrix defined by $A_{0,k}\beta=\mathbf{A}_0[p_{1:k}^T\beta]$ for every $k\times 1$ vector $\beta$. 
Based on the definition of $\bar\beta$, we aim to measure the violation of the null hypothesis 
$$
H_0: A_{0,k}E[p_{1:k}(X)p_{1:k}(X)^T]^{-1}E\left[p_{1:k}(X)Y\right]=\theta.
$$
We can estimate the left hand side by $A_{0,k}E_n\left[p_{1:k}(X)p_{1:k}(X)^T\right]^{-1}E_n\left[p_{1:k}(X)Y\right]$ and its the asymptotic variance under $H_0$ by 
$$
\hat{V}\equiv A_{0,k}E_n\left[p_{1:k}(X)p_{1:k}(X)^T\right]^{-1}E_n[\hat{\omega}\hat{\omega}^T]E_n\left[p_{1:k}(X)p_{1:k}(X)^T\right]^{-1}A_{0,k}^T.
$$
With these estimators, we consider the test statistic 
$$
\left\|
\hat{V}^{-1/2}(A_{0,k}E_n\left[p_{1:k}(X)p_{1:k}(X)^T\right]^{-1}E_n\left[p_{1:k}(X)Y\right]-\theta)
\right\|_{\infty}.
$$
To obtain its critical value, we apply the multiplier bootstrap and compute the $(1-\alpha)$ quantile, denoted by $\widehat{cv}$, of 
$$
\left\|
\hat{V}^{-1/2}A_{0,k}E_n\left[p_{1:k}(X)p_{1:k}(X)^T\right]^{-1}E_n[\eta\hat{\omega}]
\right\|_{\infty}
$$
conditional on the data set, where $\eta_1,\ldots,\eta_n$ are independent Rademacher multiplier random variables that are independent of the data.

A confidence region for $\theta_0$ can be constructed based on the test inversion. 
In this setup, we can construct a confidence region for $\theta_0$, $\widehat{CR}_{\theta}$, by collecting all $\theta$'s satisfying the following linear constraints for some $\beta\in\mathbb{R}^k$: 
\begin{equation}\label{eq:sample_error_control_proj}
\left\|
\hat{V}^{-1/2}(A_{0,k}E_n\left[p_{1:k}(X)p_{1:k}(X)^T\right]^{-1}E_n\left[p_{1:k}(X)Y\right]-A_{0,k}\beta)
\right\|_{\infty}\leq \widehat{cv},
\end{equation}
$$
|[\mathbf{A}_0p_{1:k}^T](w_0)-\theta(w_0)|\leq\delta_0(w_0)\mbox{ for every }w_0\in\mathcal{W}_0,\mbox{ and }
$$
\begin{equation}\label{eq:additiona_restriction_proj}
[\mathbf{A}_1p_{1:k}^T](w_1)\beta\leq\delta_1(w_1)\mbox{ for every }w_1\in\mathcal{W}_1.
\end{equation}

For every value $w_0 \in \mathcal{W}_0$, we can compute the projection of $\widehat{CR}_{\theta}$ to $\theta_0(w_0)$ by solving two linear programming problems w.r.t. $\beta$:
$$
\mbox{ minimize }[A_{0,k}\beta](w_0)-\delta_0(w_0)\mbox{ over $\beta$ subject to } \eqref{eq:sample_error_control_proj} \ \& \ \eqref{eq:additiona_restriction_proj},
$$
and 
$$ 
\mbox{ maximize }[A_{0,k}\beta](w_0)+\delta_0(w_0)\mbox{ over $\beta$ subject to } \eqref{eq:sample_error_control_proj} \ \& \ \eqref{eq:additiona_restriction_proj}.
$$
In other words, the projection is the closed interval 
$$
\left[\
\underset{\text{s.t.}\ \eqref{eq:sample_error_control_proj}\&\eqref{eq:additiona_restriction_proj}}{\min_{\beta}}
[A_{0,k}\beta](w_0)-\delta_0(w_0),\ \
\underset{\text{s.t.}\ \eqref{eq:sample_error_control_proj}\&\eqref{eq:additiona_restriction_proj}}{\max_{\beta}}
[A_{0,k}\beta](w_0)+\delta_0(w_0)\
\right].
$$

Formal theoretical properties of the the confidence interval constructed by this procedure follow from analogous arguments to those in Sections \ref{sec:model} and \ref{sec:theory}.
In the application presented in the following section, the parameter $\theta_0$ of interest is a scalar (and finite dimensional in particular) and we therefore adopt this approach to constructing its confidence interval.

\section{Application to Regression Kink Design}\label{sec:rkd}
In this section, we present an application of our proposed method to the regression kink design (RKD). 
Since the regression kink design is based on estimates of slopes as opposed to levels, statistical inference based on nonparametric estimates often entails slow convergence rates and thus wide confidence intervals.
To mitigate this adverse feature of the regression kink design, we propose to impose shape restrictions that are motivated by the underlying economic structures.

To introduce the RKD, consider the structure
$$
Y=Y(T,X,U)\mbox{ and }T= T(X),
$$
where $Y$ denotes the outcome variable, $T$ denotes the treatment variable, $X$ denotes the running variable, and $U$ denotes the random vector of unobserved characteristics.
A researcher is often interested in the partial effect $\partial Y(T,X,U) / \partial T$ of the treatment variable on the outcome variable.
Since the unobserved characteristics $U$ are generally correlated with the running variable $X$ and thus with the treatment $T=T(X)$, one would need to exploit exogenous variations in the treatment variable in order to identify this partial effect.
If the treatment policy function $T(\cdot)$ exhibits a `kink' at a known point $\bar{x}$, then this shape restriction can be exploited to induce local exogenous variations in the treatment variable $T$ as well, so that the partial effect of interest may be identified.
This approach of the so-called regression kink design (RKD) was proposed by \citet{nielsen2010estimating} and \citet{card2015inference} -- see \citet{dong2016jump} for the case of a binary treatment, and see \citet{chiang2019causal} and \citet{chen2020quantile} for heterogeneous treatment effects.

Suppose that a researcher is interested in conducting inference for the average partial effect $h^1(\bar{x})\equiv E\left[\left. {\partial Y(T,X,U)}/{\partial T} \right\vert X=\bar{x}\right]$ at  the kink point $\bar{x}$.
Under regularity conditions, we can obtain the following decomposition of the derivative $g_0'(X)$ of $g_0(x) = E[Y|X=x]$:
\begin{align}
g_0'(x)
&=
\underbrace{E\left[\left. \frac{\partial Y(T,X,U)}{\partial T} \right\vert X=x\right]}_{\text{Partial Effect of Interest: } h^1(x)} \cdot T'(x)
+
\underbrace{E\left[\left. \frac{\partial Y(T,X,U)}{\partial X} \right\vert X=x\right]}_{\text{Direct Effect of $X$: } h^2(x)}
\notag\\
&\quad+ 
\underbrace{E\left[\left. Y \cdot \frac{\partial \log f_{U|X}(U|X)}{\partial X} \right\vert X=x\right]}_{\text{Endogenous Effect: } h^3(x)}.
\label{eq:rkd_decomposition}
\end{align}
If $T'(\cdot)$ is discontinuous (i.e., $T(\cdot)$ is kinked) at $\bar{x}$ while each of $h^1$, $h^2$ and $h^3$ is continuous at $\bar{x}$, then this decomposition implies that the partial effect of interest at $\bar{x}$ can be identified by
$$
h^1(\bar{x}) = \frac{\lim_{x\downarrow\bar{x}} g_0'(x) - \lim_{x\uparrow\bar{x}} g_0'(x)}{\lim_{x\downarrow\bar{x}} T'(x) - \lim_{x\uparrow\bar{x}} T'(x)},
$$
cf. \citet{nielsen2010estimating,card2015inference}.
We can represent the parameter of interest via $h^1(\bar{x})=\mathbf{A}_0 g_0$, using a linear operator $\mathbf{A}_0$ defined by 
\begin{equation}\label{eq:RKD_A_0}
\mathbf{A}_0 g =\frac{\lim_{x\downarrow\bar{x}} g'(x) - \lim_{x\uparrow\bar{x}} g'(x)}{\lim_{x\downarrow\bar{x}} T'(x) - \lim_{x\uparrow\bar{x}} T'(x)}.
\end{equation}
Even though $g_0$ is unknown, the operator  $\mathbf{A}_0$ is known since $T(\cdot)$ is a known function.
In this case, $\mathcal{W}_0=\{\bar{x}\}$, and the parameter of interest $\theta_0=\mathbf{A}_0 g_0$ is a scalar.

Although $\theta_0$ is nonparametrically estimable, an estimator based on slopes of nonparametric regression functions usually suffers from slow rates of convergence, and thus it may not provide an informative confidence interval.
If an economic structure motivates shape restrictions, then imposing such restrictions may conceivably contribute to shrinking the length of the confidence interval.
With this motivation, in Section \ref{sec:empirical_analysis}, we demonstrate how shape restrictions help in conducting statistical inference in the analysis of of unemployment insurance (UI).

\subsection{Causal Effects of UI Benefits on Unemployment Duration}\label{sec:empirical_analysis}

Unemployment insurance (UI) benefits play important roles in supporting consumption smoothing under the risk of unemployment.
A potential drawback of the UI benefits is the moral hazard effects, that is, the UI benefits may discourage unemployed workers from looking for jobs, leading to elongated unemployment durations and thus economic inefficiency.
Identifying and estimating these moral hazard effects have been of research interest in labor economics. 
\citet{landais2015assessing} suggests to exploit the non-smooth UI benefit schedule as detailed below, and thus to use the regression kink design to identify the effects of UI benefits on the duration of unemployment.
Applying this identification strategy to the data of the Continuous Wage and Benefit History Project \citep[cf.][]{moffitt1985effect}, \citet{landais2015assessing} finds that there are positive effects of the UI benefit amounts on the duration of unemployment, even after controlling for unobserved source of endogenous selection of the duration that may be correlated with the pre-unemployment income and thus the benefit amount.
\citet{chiang2019causal} further investigate heterogeneous effects of the UI benefit amount on the duration by using the quantile regression kink design.

\cite{landais2015assessing} considers the following empirical framework of assessing the welfare effects of unemployment benefits.
The outcome $Y$ of interest is the duration of unemployment.
Upon becoming unemployed, an individual can apply for UI and receives a weekly benefit amount of $T=T(X)$, where $X$ is the highest quarterly earning in the last four completed calendar quarters prior to the date of the UI claim.
The partial effect $\partial Y(T,X,U) / \partial T$ measures the moral hazard effect of the UI benefits on the duration of unemployment in this setting.
Since the unobserved characteristics $U$ contain cognitive and non-cognitive skills of the individual, such as attitudes toward work, that are generally correlated with the labor income $X$ received prior to the unemployment, one would need exogenous variations in the treatment variable in order to identify this moral hazard effect.

As in \citet{landais2015assessing}, we can exploit the fact that the UI benefits policy $T(\cdot)$ exhibits a kinked shape.
In particular, the UI schedule in the state of Louisiana is linear in $X$ with a constant $t\equiv 1/25$ of proportionality up to a fixed ceiling $t_{\max}$.
(Note that the unit of $X$ is U.S. dollars per quarter, whereas the unit of $T(X)$ is U.S. dollars per week. Therefore, this constant of proportionality implies that the UI benefit amount is approximately a half of the prior earnings.)
The maximum UI benefit amount is $\bar{t}=$ \$183 during the period between September 1981 and September 1982, and $\bar{t}=$ \$205 during the period between September 1982 and December 1983.
In short, the UI benefits policy  takes the form of 
$$
T(x) =
\begin{cases}
t \cdot x & \text{if } x < t_{\max} / t\\
t_{\max}  & \text{if } x \ge t_{\max} / t,
\end{cases}
$$
and $T$ is thus kinked at $\bar{x}= t_{\max}/t$. 
Individuals can continue to receive the benefits determined by this formula as far as they remain unemployed up to the maximum duration of 28 weeks.

We construct a data set by following the data construction in \cite{landais2015assessing} and \cite{chiang2019causal}.
We focus on the observations in Louisiana. 
The sample size of the original data is 9,008 for the period between September 1981 and September 1982, and 16,463 for the period between September 1982 and December 1983.
Since we are interested in the information around the kink location $\bar{x}$, for simplicity, we focus on the (sub-)sample of the observations in the interval $X\in[\bar{x}-5000,\bar{x}+5000]$.
The resultant sample size is 8,677 for the period between September 1981 and September 1982, and the resultant sample size is 15,763 for the period between September 1982 and December 1983.

In this empirical application, we can consider a few shape restrictions on the unknown conditional mean function $g_0(x)=E[Y\mid X=x]$.
First of all, to impose the continuity of $g_0$ at $\bar{x}$, we can use the shape restriction 
\begin{equation}
\lim_{x\downarrow\bar{x}} g_0(x)=\lim_{x\uparrow\bar{x}} g_0(x).
\label{eq:rkd_shape_restriction_A}
\end{equation}
This restriction is not redundant when we use difference sieves for the left of $\bar{x}$ and the right of $\bar{x}$.
Moreover, it may be reasonable to assume that  $h^2$ and $h^3$ are both non-increasing.
Specifically, the direct effect $h^2$ is non-increasing if formerly higher-income earner can find the next job more quickly than formerly lower-income earners on average.
The endogenous effect $h^3$ is non-increasing if individuals with higher abilities can find the next job more quickly than those with lower abilities on average.
Since $T(\cdot)$ is a constant function to the right of the kink location in this application, this assumption together with the decomposition \eqref{eq:rkd_decomposition} implies that the reduced form $g_0$ is non-increasing to the right of the kink location $\bar{x}$.
This consideration leads to the slope restriction 
\begin{equation}
g_0'(x)\leq\mbox{ for every }x>\bar{x}.
\label{eq:rkd_shape_restriction_B}
\end{equation}
In the notations in Section \ref{sec:model}, we can summarize the shape restrictions \eqref{eq:rkd_shape_restriction_A} and  \eqref{eq:rkd_shape_restriction_B} as 
\begin{equation}
[\mathbf{A}_1 g_0] (w_1)\leq 0\mbox{ for every }w_1\in\mathcal{W}_1,
\label{eq:rkd_shape_restriction_2}
\end{equation}
where $\mathcal{W}_1=\{-2,-1\}\cup\{w_1:w_1>\bar{x}\}$ and 
$$
[\mathbf{A}_1 g] (w_1) =
\begin{cases}
\lim_{x\downarrow\bar{x}} g(x) - \lim_{x\uparrow\bar{x}} g(x)&\mbox{ if }w_1=-2\\
\lim_{x\uparrow\bar{x}} g(x) - \lim_{x\downarrow\bar{x}} g(x)&\mbox{ if }w_1=-1\\
g'(w_1)&\mbox{ if }w_1>\bar{x}.
\end{cases}
$$

Now, we outline the concrete implementation procedure to exploit these shape restrictions \eqref{eq:rkd_shape_restriction_2}, for inference about the causal parameter $\theta_0 = \mathbf{A}_0 g_0$ defined in \eqref{eq:RKD_A_0}.
For every even natural number $k$, we use the basis functions  
$$
p_{1:k}=(\ell_{L,0},\ell_{R,0},\cdots,\ell_{L,k/2-1},\ell_{R,k/2-1}),
$$ 
where $\left(\ell_{L,0},\ell_{L,1},\cdots,\ell_{L,k/2-1}\right)$ are the first $k/2$ terms of an orthonormal basis for $L^2([\bar{x}-5000,\bar{x}])$ and $\left(\ell_{R,0},\ell_{R,1},\cdots,\ell_{R,k/2-1}\right)$ are the first $k/2$ terms of an orthonormal basis for $L^2([\bar{x},\bar{x}+5000])$.
We use the shifted Legendre bases in the empirical application in this subsection as well as in the simulation studies in Section \ref{sec:simulation}.
We follow Section \ref{sec:projection} to construct the $(1-\alpha)$-level confidence interval for $\theta_0$ subject to the shape constraint \eqref{eq:rkd_shape_restriction_2}, where we restrict $\mathcal{W}_1=\{-2,-1\}\cup\{\xi_1,\dots,\xi_l\}$ with 99 equally spaced grid points $\{\xi_1,\dots,\xi_l\} \subset (\bar{x},\bar{x}+5000)$.
The following algorithm provides a step-by-step procedure of the construction.
\begin{algorithm}
\begin{enumerate}
\item[]
\item 
For every observation $i=1,\ldots,n$, construct the vector 
$$
\hat\omega_i = p_{1:k}(X_i)\left( Y_i - p_{1:k}(X_i)^T E_n\left[p_{1:k}(X)p_{1:k}(X)^T\right]^{-1} E_n\left[p_{1:k}(X)Y\right] \right).
$$
\item Construct the four matrices: 
{\small
\begingroup
\allowdisplaybreaks
\begin{align*}
A_{0,k}&=\left(\begin{array}{ccccc}
-\lim_{x\uparrow\bar{x}} \ell_{L,0}'(x) & \lim_{x\downarrow\bar{x}} \ell_{R,0}'(x) & \cdots & -\lim_{x\uparrow\bar{x}} \ell_{L,k/2-1}'(x) & \lim_{x\downarrow\bar{x}} \ell_{R,k/2-1}'(x)
\end{array}\right),
\\
B_0&=\frac{A_{0,k}E_n\left[p_{1:k}(X)p_{1:k}(X)^T\right]^{-1}E_n[p_{1:k}(X)Y]}{\sqrt{A_{0,k}E_n\left[p_{1:k}(X)p_{1:k}(X)^T\right]^{-1}E_n[\hat{\omega}\hat{\omega}^T]E_n\left[p_{1:k}(X)p_{1:k}(X)^T\right]^{-1}A_{0,k}^T}},
\\
B_1&=\frac{A_{0,k}}{\sqrt{A_{0,k}E_n\left[p_{1:k}(X)p_{1:k}(X)^T\right]^{-1}E_n[\hat{\omega}\hat{\omega}^T]E_n\left[p_{1:k}(X)p_{1:k}(X)^T\right]^{-1}A_{0,k}^T}},\mbox{ and }
\\
B_2&=
\left(\begin{array}{ccccc}
-\lim_{x\uparrow\bar{x}} \ell_{L,0}(x) & \lim_{x\downarrow\bar{x}} \ell_{R,0}(x) & \cdots & -\lim_{x\uparrow\bar{x}} \ell_{L,k/2-1}(x) & \lim_{x\downarrow\bar{x}} \ell_{R,k/2-1}(x)
\\
\lim_{x\uparrow\bar{x}} \ell_{L,0}(x) & -\lim_{x\downarrow\bar{x}} \ell_{R,0}(x) & \cdots & \lim_{x\uparrow\bar{x}} \ell_{L,k/2-1}(x) & -\lim_{x\downarrow\bar{x}} \ell_{R,k/2-1}(x)
\\
0 & \ell_{R,0}'(\xi_1) & \cdots & 0 & \ell_{R,k/2-1}'(\xi_1)
\\
\vdots&\vdots&&\vdots&\vdots
\\
\\
0 & \ell_{R,0}'(\xi_l) & \cdots & 0 & \ell_{R,k/2-1}'(\xi_l)
\end{array}\right).
\end{align*}
\endgroup
}
\item
Generate $M$ independent samples $\{\eta_{m,1},\cdots,\eta_{m,n}\}_{m=1,\ldots,M}$ of Rademacher random variables independently from data, and compute $\widehat{cv}$ by the $(1-\alpha)$-quantile of 
$$
\left\{\frac{\left|A_{0,k}E_n\left[p_{1:k}(X)p_{1:k}(X)^T\right]^{-1}E_n[\eta_m\hat{\omega}]\right|}{\sqrt{A_{0,k}E_n\left[p_{1:k}(X)p_{1:k}(X)^T\right]^{-1}E_n[\hat{\omega}\hat{\omega}^T]E_n\left[p_{1:k}(X)p_{1:k}(X)^T\right]^{-1}A_{0,k}^T}}\right\}_{m=1,\ldots,M}.
$$ 
\item Solve the linear programs
\begin{align*}
\min_{\beta} \ & \ A_{0,k}\beta- \delta_0 &&& \max_{\beta} \ & \ A_{0,k}\beta+ \delta_0
\\
\ \text{s.t.} \ & \ B_1 \beta \le B_0 + {cv} &&& \ \text{s.t.} \ & \ B_1 \beta \le B_0 +\hat{cv}
\\
& \ B_1 \beta \ge B_0 - {cv} &&&& \ B_1 \beta \ge B_0 - \hat{cv}
\\
& \ B_2 \beta \le \delta_1 &&&& \ B_2 \beta \le \delta_1.
\end{align*}
The solutions to these two linear programs are the boundary points of the $(1-\alpha)$-level confidence interval for $\theta_0$.
\end{enumerate}
\end{algorithm}

Table \ref{tab:empirical} summarizes the results for the statistical inference about the marginal effects of UI benefits on unemployment duration in Louisiana, based on the above algorithm.
Displayed are the 95\% confidence intervals and their lengths for each of the period between September 1981 and September 1982 (top panel) and the period between September 1982 and December 1983 (bottom panel).
We use the largest sieve dimension $k=12$ among those that were used in our simulation studies presented in Appendix \ref{sec:simulation}.
(The shape restrictions do not bind for the cases of $k=4$ or $k=8$. It is possibly because the current sample sizes are much larger than those used in our simulation studies.)
For the UI benefit amount $T(X)$, we use two alternative measures. 
One is the amount of UI benefits claimed (left half of each panel) and the other is the amount of UI benefits actually paid (right half of each panel) by following the prior work.
That said, these two alternative measures provide almost the same results, and therefore our discussions below apply to the results based on both of the two measures.

The reported confidence intervals contain the point estimates reported in the prior work by \citet{landais2015assessing}.
That said, the econometric specifications are different, and results are thus hard to compare.
Our results based on no shape restriction are effectively what we would get from the standard method with running the fifth-degree polynomial regressions on each side of the left and right of $\overline{x}$.
In contrast, \citet{landais2015assessing} uses the polynomials of degree one, i.e., the linear specification, for the main estimation results reported in his Table 2.
Due to the greater flexibility of our econometric specification, our method naturally incurs wider confidence intervals, but we demonstrate that shape restrictions will contribute to providing more informative results.

Our confidence interval includes the zero for the period between September 1981 and September 1982 (the first panel of Figure \ref{tab:empirical}) if no shape restriction is imposed, i.e., if the conventional approach is taken.
However, in this panel (for the period between September 1981 and September 198), shape restrictions \eqref{eq:rkd_shape_restriction_2} shrink the confidence intervals.
(Although these shrunken confidence intervals have their lower bounds approximately 0.000, note that we do not directly impose a sign restriction on the causal effects \textit{per se}, in the shape restrictions \eqref{eq:rkd_shape_restriction_2}. See our discussions above \eqref{eq:rkd_shape_restriction_2} for motivations of these shape restrictions.)
On the other hand, the confidence intervals are already informative for the period between September 1981 and September 1982 even without any shape restriction, and imposing shape restrictions \eqref{eq:rkd_shape_restriction_2} therefore will not contribute to shrinking the confidence intervals.
These results thus demonstrate one case in which shape restrictions contribute to enhancing the informativeness of statistical inference, and another case in which they do not.

\begin{table}[t]
	\centering
		\scalebox{0.95}{
		\begin{tabular}{rrlccrlc}
		\hline\hline
			September 1981 -- September 1982
			& \multicolumn{3}{c}{UI Claimed} && \multicolumn{3}{c}{UI Paid} \\
		\cline{2-4}\cline{6-8}
			Sieve Dimension: $k=12$
			& \multicolumn{2}{c}{95\% CI} & Length && \multicolumn{2}{c}{95\% CI} & Length\\
			\hline
			No Shape Restriction & 
		 [-0.023, & 0.044] & 0.067 &&[-0.030, & 0.040] & 0.070\\
			Shape Restrictions \eqref{eq:rkd_shape_restriction_2} & 
		  [0.000, & 0.044] & 0.044 && [0.000, & 0.040] & 0.040\\
		\hline\hline	
		\\\\
		\hline\hline
			September 1982 -- December 1983
			& \multicolumn{3}{c}{UI Claimed} && \multicolumn{3}{c}{UI Paid} \\
		\cline{2-4}\cline{6-8}
			Sieve Dimension: $k=12$
			& \multicolumn{2}{c}{95\% CI} & Length && \multicolumn{2}{c}{95\% CI} & Length\\
			\hline
			No Shape Restriction & 
			[0.002, & 0.048] & 0.046 && [0.002, & 0.047] & 0.045\\
			Shape Restrictions \eqref{eq:rkd_shape_restriction_2} & 
			[0.002, & 0.048] & 0.046 && [0.002, & 0.047] & 0.045\\
		\hline\hline	
		\end{tabular}
		}
	\caption{95\% confidence intervals of the marginal effect of UI benefit amount on unemployment duration for Louisiana, 1981--1983.}
	\label{tab:empirical}
\end{table}

\section{Conclusion}\label{sec:conclusion}

Nonparametric inference under shape restrictions can demand high computational burdens, e.g., a grid search over a high-dimensional sieve parameter space.
In this paper, we provide a novel method of constructing confidence bands/intervals for nonparametric regression functions under shape constraints.
The proposed method can be implemented via a linear programming, and it thus relieves the conventional computationally burdens.
A usage of this new method is illustrated with an application to the regression kink design.
Inference in the regression kink design often suffers from wide confidence intervals due to the slow convergence rates of nonparametric derivative estimators.
If economic models and structures motivate shape restrictions, then these restrictions may contribute to shrinking the confidence interval.
We demonstrate this point with real data for an analysis of the causal effects of unemployment insurance benefits on unemployment durations.
Specifically, for analysis of the effects of unemployment insurance benefits on the unemployment duration, the shape restrictions motivated by non-increasing direct effects and non-increasing endogenous effects drastically shrink the confidence interval of causal effects.

\newpage
\appendix
\section*{Appendix}

\section{Proofs for the Results in the Main Text}\label{sec:proofs}
\subsection{Proof of Theorem \ref{theorem:projection}}\label{sec:theorem:projection}

\begin{proof}
First, we are going to show that the projection of $CR_{\theta}$ to $\theta_0(w_0)$ is included in the interval defined in Theorem \ref{theorem:projection}.
Let $\theta$ be any element of $CR_{\theta}$. 
Then $[A_{0,k}\beta](w_0)-\delta_0(w_0)\leq\theta(w_0)\leq [A_{0,k}\beta](w_0)+\delta_0(w_0)$ for some $\beta\in\mathbb{R}^k$ such that \eqref{eq:sample_error_control} and \eqref{eq:additiona_restriction}.
It implies $\theta(w_0)$ is included in the interval.

Then, we are going to show that the interval is included in the projection of $CR_{\theta}$ to $\theta_0(w_0)$. 
Let $c$ be any element of the interval defined in Theorem \ref{theorem:projection}. 
There is $\beta$ such that $|[A_{0,k}\beta](w_0)-c|\leq\delta_0(w_0)$ and that $\beta$ satisfies \eqref{eq:sample_error_control} and \eqref{eq:additiona_restriction}. 
Define $\theta(\tilde{w}_0)$ by setting it to $[A_{0,k}\beta](\tilde{w}_0)$ for $\tilde{w}_0\ne w_0$ and to $c$ for $w_0$.  
Then this $\theta$ satisfies \eqref{eq:approx_control} with $\theta(w_0)=c$. 
It implies $c$ is included in the projection of $CR_{\theta}$ to $\theta_0(w_0)$. 
\end{proof}

\subsection{Proof of Theorem \ref{theorem:main_size}}\label{sec:theorem:main_size}

We first state four lemmas that play important roles in the proof of Theorem \ref{theorem:main_size}. Their proofs are delegated to Appendix \ref{sec:proofs_auxiliary_lemmas}. 
\begin{lemma}\label{lemma:GaussApprox}
Under Assumptions \ref{assn:new_combined_all} (a) and \ref{assn:new_combined_all} (b), 
there exist $k$-dimensional centered Gaussian random vectors $Z$ and $Z^\ast$ such that 
\begin{align*}
&\sup_{t}
\left|\mathbb{P}\left(\left\|Z\right\|_{\infty}\leq t\right)-\mathbb{P}\left(\left\|E_n[{E[\omega\omega^T]}^{-1/2}\omega]
\right\|_{\infty}\leq t\right)\right|=o(1),\\
&\sup_{t}\left|\mathbb{P}\left(\left\|Z^\ast\right\|_{\infty}\leq t\right)
-\mathbb{P}\left(\left\|E_n[{E[\omega\omega^T]}^{-1/2}\eta{\omega}]
\right\|_{\infty}\leq t\right)\right|=o(1),
\end{align*}
and
$
E[ZZ^T]=E[Z^\ast(Z^\ast)^T].
$
\end{lemma}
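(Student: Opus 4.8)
The plan is to recognize both quantities inside the sup-norms as normalized sample means of i.i.d.\ vectors and to apply, respectively, the high-dimensional central limit theorem and its multiplier-bootstrap version. Write $\Sigma\equiv E[\omega\omega^T]$, which is invertible with eigenvalues bounded and bounded away from zero by Assumption~\ref{assn:new_combined_all}(a), and set $x_i\equiv\Sigma^{-1/2}\omega_i$. Since $E[\omega]=E[p_{1:k}(X)Y]-E[p_{1:k}(X)p_{1:k}(X)^T]\bar\beta=0$ by the definition of $\bar\beta$, the vectors $x_1,\dots,x_n$ are i.i.d., centered, and satisfy $E[x_ix_i^T]=\Sigma^{-1/2}\Sigma\Sigma^{-1/2}=I_k$. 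Because $E_n[\Sigma^{-1/2}\omega]=n^{-1/2}S_n$ and $E_n[\Sigma^{-1/2}\eta\omega]=n^{-1/2}S_n^\ast$ with $S_n\equiv n^{-1/2}\sum_{i=1}^n x_i$ and $S_n^\ast\equiv n^{-1/2}\sum_{i=1}^n\eta_i x_i$, it suffices to produce centered Gaussian vectors $\tilde Z,\tilde Z^\ast\sim N(0,I_k)$ with
\begin{align*}
\sup_u\bigl|\mathbb{P}(\|S_n\|_\infty\le u)-\mathbb{P}(\|\tilde Z\|_\infty\le u)\bigr|&=o(1),\\
\sup_u\bigl|\mathbb{P}(\|S_n^\ast\|_\infty\le u)-\mathbb{P}(\|\tilde Z^\ast\|_\infty\le u)\bigr|&=o(1);
\end{align*}
the two assertions of the lemma then follow by the change of threshold $u=\sqrt{n}\,t$ and by setting $Z\equiv\tilde Z/\sqrt{n}$ and $Z^\ast\equiv\tilde Z^\ast/\sqrt{n}$, both of which are $N(0,I_k/n)$, so that $E[ZZ^T]=E[Z^\ast(Z^\ast)^T]=I_k/n$.

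For the first display I would apply the high-dimensional central limit theorem of \cite{chernozhukov/chetverikov/kato:2017} to the i.i.d.\ sum $S_n$ over the class of hyperrectangles (which contains all sup-norm balls $[-u,u]^k$). Its moment hypotheses---$E[x_{ij}^2]\ge b$, $E[|x_{ij}|^{2+\kappa}]\le B_n^\kappa$ for $\kappa=1,2$, and $E[\|x_i\|_\infty^q]\le B_n^q$, i.e.\ Conditions (M.1), (M.2), (E.2) of \citet{chernozhukov2017}---hold by Assumption~\ref{assn:new_combined_all}(b)(ii), and its rate requirements $B_n^2\log^7(nk)/n\to0$ and $B_n^2\log^3(nk)/n^{1-2/q}\to0$ are exactly Assumption~\ref{assn:new_combined_all}(b)(iii). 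This produces $\tilde Z\sim N(0,\mathrm{Cov}(x_1))=N(0,I_k)$ with the stated Kolmogorov bound, hence $Z$.

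For the second display I would establish validity of the Rademacher multiplier scheme conditionally on the data $\mathcal D_n\equiv\{(Y_i,X_i^T)\}_{i=1}^n$. Given $\mathcal D_n$, $S_n^\ast$ is a sum of independent mean-zero terms $n^{-1/2}\eta_i x_i$ with conditional covariance $\widehat\Sigma_n\equiv n^{-1}\sum_{i=1}^n x_ix_i^T$ and with $\|\eta_i x_i\|_\infty=\|x_i\|_\infty$. The conditional (multiplier/bootstrap) form of the high-dimensional CLT of \cite{chernozhukov/chetverikov/kato:2017}, used together with $\max_{i\le n}\|x_i\|_\infty=O_{\mathbb P}(n^{1/q}B_n)$ (from $E[\|x\|_\infty^q]\le B_n^q$) and the rate conditions in Assumption~\ref{assn:new_combined_all}(b)(iii), gives $\sup_u|\mathbb{P}(\|S_n^\ast\|_\infty\le u\mid\mathcal D_n)-\mathbb{P}(\|N(0,\widehat\Sigma_n)\|_\infty\le u\mid\mathcal D_n)|=o_{\mathbb P}(1)$. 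I would then replace the random covariance by the identity: Rudelson's inequality \citep[Lemma~6.2]{belloni2015some}, under Assumptions~\ref{assn:new_combined_all}(b)(ii)--(iii), yields $\|\widehat\Sigma_n-I_k\|_\infty=o_{\mathbb P}(1)$ (in particular the diagonal entries of $\widehat\Sigma_n$ remain bounded away from $0$), so that the Gaussian comparison and anti-concentration inequalities of \cite{chernozhukov2015comparison} give $\sup_u|\mathbb{P}(\|N(0,\widehat\Sigma_n)\|_\infty\le u\mid\mathcal D_n)-\mathbb{P}(\|\tilde Z^\ast\|_\infty\le u)|=o_{\mathbb P}(1)$ with $\tilde Z^\ast\sim N(0,I_k)$. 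Chaining these two conditional bounds by the triangle inequality and then taking expectations---each term is a probability, hence uniformly bounded---upgrades $o_{\mathbb P}(1)$ to $o(1)$ for the unconditional Kolmogorov distance, which yields $Z^\ast$.

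The main obstacle is the second display. The conditional high-dimensional CLT must be invoked for summands that are themselves random (they depend on $\mathcal D_n$ through $x_i$), so both the verification of its hypotheses and the control of the covariance gap $\|\widehat\Sigma_n-I_k\|_\infty$ have to be carried out on a high-probability event, and the passage from the conditional $o_{\mathbb P}(1)$ statements to the unconditional $o(1)$ conclusion needs to be done carefully---this is also the point where the anti-concentration inequality is essential, to convert a small Kolmogorov-distance perturbation of the covariance into a small perturbation of the distribution function. By contrast, the first display is close to a direct citation once the moment and rate conditions are matched to Assumption~\ref{assn:new_combined_all}(b), and the remaining ingredients---invertibility of $\Sigma$ and of $E[p_{1:k}(X)p_{1:k}(X)^T]$, and the elementary rescaling between $E_n[\cdot]$ and the normalized sum---are routine.
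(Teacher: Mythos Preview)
Your treatment of the first display matches the paper's exactly: verify Conditions (M.1), (M.2), (E.2) of \cite{chernozhukov2017} from Assumption~\ref{assn:new_combined_all}(b) and invoke Proposition~2.1 there.

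For the second display, however, you take a genuinely different---and much longer---route than the paper. You condition on the data, apply a conditional high-dimensional CLT to obtain $N(0,\widehat\Sigma_n)$, then close the gap $\widehat\Sigma_n\to I_k$ via a concentration inequality together with the Gaussian comparison and anti-concentration machinery, and finally integrate out. The paper instead observes that the lemma only concerns the \emph{unconditional} law of $E_n[\Sigma^{-1/2}\eta\omega]$. Since the $\eta_i$ are Rademacher and independent of the data, the vectors $\eta_i x_i$ are themselves i.i.d.\ and centered, and because $|\eta_i|=1$ one has $E[(\eta x_{j})^2]=E[x_j^2]\ge b$, $E[|\eta x_j|^{2+\kappa}]=E[|x_j|^{2+\kappa}]\le B_n^\kappa$, and $E[\|\eta x\|_\infty^q]=E[\|x\|_\infty^q]\le B_n^q$. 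Hence Conditions (M.1), (M.2), (E.2) hold for $\{\eta_i x_i\}$ verbatim, and the \emph{same} Proposition~2.1 applies directly, yielding $Z^\ast\sim N(0,I_k/n)$ in one line. The covariance equality then follows immediately from $E[\eta^2 xx^T]=E[xx^T]=I_k$.

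Your conditional argument can be made to work, but two points deserve care. First, Lemma~6.2 of \cite{belloni2015some} (Rudelson's inequality) bounds the operator norm under an almost-sure bound on $\|x_i\|_2$; Assumption~\ref{assn:new_combined_all}(b) only gives coordinate-wise and $\|\cdot\|_\infty$ moment bounds, so to control $\|\widehat\Sigma_n-I_k\|_\infty$ you would instead use a maximal/Bernstein-type inequality over the $k^2$ entries. Second, the Gaussian comparison step carries a $\log k$ factor, so you must check that the rate at which $\|\widehat\Sigma_n-I_k\|_\infty$ vanishes absorbs it; this is implied by Assumption~\ref{assn:new_combined_all}(b)(iii) but should be stated. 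The paper's unconditional shortcut avoids all of this.
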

\begin{lemma}\label{lemma:conv_omega2}
Under Assumptions \ref{assn:new_combined_all} (a) and \ref{assn:new_combined_all} (b), we have 
$$
\max\left\{\left\|E_n[(\eta+1){\omega}]\right\|_2,\left\|E_n[{\omega}]\right\|_2\right\}=O_{P}\left(\sqrt{\frac{\xi_k^2}{n}}\right).
$$
\end{lemma}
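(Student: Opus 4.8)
The plan is to bound the second moments of the two averages and then apply Markov's inequality, using that both $\omega$ and $\eta\omega$ are mean zero. Indeed, by the definition of $\bar\beta$ as the population least squares coefficient, $E[\omega]=E\big[p_{1:k}(X)(Y-p_{1:k}(X)^T\bar\beta)\big]=0$; and since each $\eta_i$ is Rademacher and independent of the data, $E[\eta_i\omega_i]=E[\eta_i]\,E[\omega_i]=0$.

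First I would treat $\|E_n[\omega]\|_2$. Since $\omega_1,\dots,\omega_n$ are i.i.d.\ and centered, all cross terms vanish and
\begin{equation*}
E\big[\|E_n[\omega]\|_2^2\big]=\frac{1}{n}\,E\big[\|\omega\|_2^2\big].
\end{equation*}
The key step is to bound $E[\|\omega\|_2^2]$ uniformly in $n$ by a constant multiple of $\xi_k^2$. Writing $E[\|\omega\|_2^2]=\operatorname{tr}\!\big(E[\omega\omega^T]\big)\le k\,\lambda_{\max}\!\big(E[\omega\omega^T]\big)$ and invoking the eigenvalue bound in Assumption \ref{assn:new_combined_all}(a) gives $E[\|\omega\|_2^2]=O(k)$; combining this with $k=O(\xi_k^2)$ --- which also follows from Assumption \ref{assn:new_combined_all}(a), since $k\,\lambda_{\min}\!\big(E[p_{1:k}(X)p_{1:k}(X)^T]\big)\le \operatorname{tr}\!\big(E[p_{1:k}(X)p_{1:k}(X)^T]\big)=E[\|p_{1:k}(X)\|_2^2]\le\xi_k^2$ --- yields $E[\|E_n[\omega]\|_2^2]=O(\xi_k^2/n)$, so Markov's inequality gives $\|E_n[\omega]\|_2=O_P\big(\sqrt{\xi_k^2/n}\big)$. (Alternatively one may bound $E[\|\omega\|_2^2]\le \xi_k^2\,E\big[(Y-p_{1:k}(X)^T\bar\beta)^2\big]$ directly and control the residual second moment via Assumption \ref{assn:new_combined_all}(c).)

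Next I would handle $\|E_n[\eta\omega]\|_2$ by conditioning on the data. Given $(Y_i,X_i)_{i=1}^n$, the $\eta_i$ are i.i.d., mean zero, with $E[\eta_i^2]=1$, so the conditional second moment of $E_n[\eta\omega]$ equals $n^{-2}\sum_{i=1}^n\|\omega_i\|_2^2$; taking expectations, $E\big[\|E_n[\eta\omega]\|_2^2\big]=n^{-1}E[\|\omega\|_2^2]=O(\xi_k^2/n)$, whence $\|E_n[\eta\omega]\|_2=O_P\big(\sqrt{\xi_k^2/n}\big)$. Since $E_n[(\eta+1)\omega]=E_n[\eta\omega]+E_n[\omega]$, the triangle inequality gives $\|E_n[(\eta+1)\omega]\|_2=O_P\big(\sqrt{\xi_k^2/n}\big)$, and the maximum of two $O_P\big(\sqrt{\xi_k^2/n}\big)$ terms is again $O_P\big(\sqrt{\xi_k^2/n}\big)$, which proves the claim.

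I do not anticipate a genuine obstacle here: the only delicate point is establishing the uniform-in-$n$ bound $E[\|\omega\|_2^2]=O(\xi_k^2)$ from the maintained eigenvalue and moment conditions (rather than assuming it), together with the routine bookkeeping in the conditional second-moment computation for the Rademacher term.
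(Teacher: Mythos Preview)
Your proof is correct and follows the same overall strategy as the paper's: compute the second moment of the sample averages, use that $\omega$ and $\eta\omega$ are centered so cross terms vanish, and convert to an $O_P$ bound via Markov. The one substantive difference is in how you bound $E[\|\omega\|_2^2]$. The paper writes $\|\omega\|_2=\|p_{1:k}(X)\|_2\,|Y-p_{1:k}(X)^T\bar\beta|\le \xi_k\,|Y-p_{1:k}(X)^T\bar\beta|$ and then invokes the projection property $E[(Y-p_{1:k}(X)^T\bar\beta)^2]\le E[Y^2]<\infty$ from Assumption~\ref{assn:new_combined_all}(b)(i); this is essentially the ``alternative'' you mention in passing, except that the paper uses (b)(i) rather than (c). Your primary argument instead relies only on Assumption~\ref{assn:new_combined_all}(a): $E[\|\omega\|_2^2]=\operatorname{tr}(E[\omega\omega^T])\le k\,\lambda_{\max}(E[\omega\omega^T])=O(k)$, combined with $k=O(\xi_k^2)$ from the lower eigenvalue bound on $E[p_{1:k}(X)p_{1:k}(X)^T]$. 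Both routes are valid; yours is a bit more self-contained in that it never unpacks the definition of $\omega$, while the paper's makes the role of $\xi_k$ and the residual variance explicit. The paper also treats $E_n[(\eta+1)\omega]$ directly rather than splitting it as $E_n[\eta\omega]+E_n[\omega]$, but that difference is purely cosmetic.
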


\begin{lemma}\label{lemma:conv_Qast}
Under Assumptions \ref{assn:new_combined_all} (a) and \ref{assn:new_combined_all} (c),  we have
$$
\left\|E_n[\eta p_{1:k}(X)p_{1:k}(X)^T]\right\|_2=O_{P}\left(\sqrt{\frac{\xi_k^2\log k}{n}}\right).
$$
\end{lemma}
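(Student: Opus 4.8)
The plan is to treat $M_n \equiv E_n[\eta\, p_{1:k}(X)p_{1:k}(X)^T] = \tfrac1n\sum_{i=1}^n \eta_i\, p_{1:k}(X_i)p_{1:k}(X_i)^T$ as a Rademacher sum of rank-one positive semidefinite matrices and to bound its operator norm by conditioning on the data $\{X_i\}_{i=1}^n$. Since $\|p_{1:k}(X_i)\|_2 \le \xi_k$ almost surely by the definition of $\xi_k$, applying Rudelson's concentration inequality \citep[Lemma 6.2]{belloni2015some} conditionally on $\{X_i\}_{i=1}^n$ gives, for an absolute constant $C$,
$$
E\big[\,\|M_n\| \mid \{X_i\}_{i=1}^n \,\big] \;\le\; C\,\xi_k\sqrt{\frac{\log k}{n}}\;\big\|E_n[p_{1:k}(X)p_{1:k}(X)^T]\big\|^{1/2},
$$
where the expectation is taken over the Rademacher variables $\eta_1,\dots,\eta_n$ only.

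It then remains to show $\|E_n[p_{1:k}(X)p_{1:k}(X)^T]\| = O_P(1)$. By Assumption \ref{assn:new_combined_all}(a) the operator norm of $E[p_{1:k}(X)p_{1:k}(X)^T]$ is bounded uniformly in $n$, so it suffices to control the deviation $\|E_n[p_{1:k}(X)p_{1:k}(X)^T] - E[p_{1:k}(X)p_{1:k}(X)^T]\|$. A standard symmetrization step followed by another application of Rudelson's inequality (conditionally on $\{X_i\}_{i=1}^n$) bounds its expectation by a constant multiple of $\sqrt{\xi_k^2\log k/n} + \xi_k^2\log k/n$ after solving the resulting self-bounding inequality, and this is $o(1)$ by the rate condition $n^{-1}\xi_k^2\log k = o(1)$ in Assumption \ref{assn:new_combined_all}(c)-(iv); alternatively, this deviation bound is part of the series-estimation machinery of \citet{belloni2015some}, whose Condition A.2 is implied by Assumption \ref{assn:new_combined_all}(a). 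Hence $\|E_n[p_{1:k}(X)p_{1:k}(X)^T]\|^{1/2} = O_P(1)$.

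Combining the two steps, the conditional expectation $E[\|M_n\| \mid \{X_i\}_{i=1}^n]$ is $O_P(\sqrt{\xi_k^2\log k/n})$. To pass to the unconditional statement, for $L,L'>0$ I would bound $\mathbb{P}(\|M_n\| > L'\sqrt{\xi_k^2\log k/n})$ by $\mathbb{P}(E[\|M_n\|\mid\{X_i\}_{i=1}^n] > L\sqrt{\xi_k^2\log k/n}) + L/L'$, using conditional Markov on the complementary event; choosing $L$ large (via the previous paragraph) and then $L'$ large makes both terms arbitrarily small uniformly in large $n$, which yields $\|M_n\| = O_P(\sqrt{\xi_k^2\log k/n})$. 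I do not expect a genuine obstacle here: every ingredient — Rudelson's inequality, the sample Gram-matrix concentration, and the de-conditioning argument — is standard, and the only points demanding care are the precise conditional form of Rudelson's bound and checking that the event on which $\|E_n[p_{1:k}(X)p_{1:k}(X)^T]\|$ is bounded has probability tending to one.
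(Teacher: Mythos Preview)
Your proposal is correct and follows essentially the same route as the paper's proof. The paper conditions on the data and applies the matrix Rademacher inequality (which it cites as Lemma~6.1 of \citet{belloni2015some}, not 6.2) to obtain the conditional bound $O\big(\sqrt{\log k/n}\big)\big\|\big(E_n[(p_{1:k}p_{1:k}^T)^2]\big)^{1/2}\big\|_2$, then bounds this by $\xi_k\|E_n[p_{1:k}p_{1:k}^T]^{1/2}\|_2$ and invokes Theorem~4.6 of \citet{belloni2015some} for $\|E_n[p_{1:k}p_{1:k}^T]^{1/2}\|_2=O_P(1)$; your version folds the $\xi_k$ extraction into the first step and spells out the de-conditioning more explicitly than the paper, which leaves that step implicit.
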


\begin{lemma}\label{lemma:cov_est}
Under Assumptions \ref{assn:new_combined_all} (a) and \ref{assn:new_combined_all} (c), we have
$$
\left\|E_n[\hat{\omega}\hat{\omega}^T]^{-1/2}-{E[\omega\omega^T]}^{-1/2}\right\|_2=O_{P}\left((n^{1/\nu}\vee \ell_k c_k )\sqrt{\frac{\xi_k^2\log k}{n}}\right).
$$
\end{lemma}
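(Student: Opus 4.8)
The plan is to reduce the claim to an operator-norm bound on $\hat{\Sigma}-\Sigma$ and then pass through a perturbation inequality for the matrix inverse square root. Throughout write $Q\equiv E[p_{1:k}(X)p_{1:k}(X)^T]$, $\hat{Q}\equiv E_n[p_{1:k}(X)p_{1:k}(X)^T]$, $\Sigma\equiv E[\omega\omega^T]$ and $\hat{\Sigma}\equiv E_n[\hat{\omega}\hat{\omega}^T]$. Using the integral representation $s^{-1/2}=\pi^{-1}\int_0^\infty (s+t)^{-1}t^{-1/2}\,dt$ together with the resolvent identity $(A+tI)^{-1}-(B+tI)^{-1}=(A+tI)^{-1}(B-A)(B+tI)^{-1}$, one obtains for positive definite $A,B$ with $\lambda_{\min}(A)\wedge\lambda_{\min}(B)\ge\lambda>0$ the Lipschitz bound $\|A^{-1/2}-B^{-1/2}\|_2\le \tfrac12\lambda^{-3/2}\|A-B\|_2$ (the integral $\int_0^\infty t^{-1/2}(\lambda+t)^{-2}\,dt=\tfrac\pi2\lambda^{-3/2}$). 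Assumption \ref{assn:new_combined_all}(a) keeps $\lambda_{\min}(\Sigma)$ bounded away from zero, so the first order of business is the preliminary step of showing that $\lambda_{\min}(\hat{\Sigma})$ is bounded away from zero with probability tending to one, which follows from a separate operator-norm consistency $\|\hat{\Sigma}-\Sigma\|_2=o_P(1)$ (obtained from the decomposition below via a truncation of the residuals that avoids the $n^{1/\nu}$ factor). Granted this well-conditioning, it remains only to bound $\|\hat{\Sigma}-\Sigma\|_2$ at the advertised rate.

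Both the preliminary step and the main bound rest on the decomposition $\hat{\Sigma}-\Sigma=\{E_n[\omega\omega^T]-\Sigma\}+\{\hat{\Sigma}-E_n[\omega\omega^T]\}$. For the first (sampling) bracket I would invoke Rudelson's inequality in the form of Lemma 6.2 of \citet{belloni2015some}, whose leading term is of order $\sqrt{\|\Sigma\|_2\,\log k\,\max_{i}\|\omega_i\|_2^2/n}$. Since $\omega_i=p_{1:k}(X_i)e_i$ with $e_i\equiv Y_i-p_{1:k}(X_i)^T\bar{\beta}$, we have $\max_i\|\omega_i\|_2\le \xi_k\max_i|e_i|$, and writing $e_i=(Y_i-g_0(X_i))+(g_0(X_i)-p_{1:k}(X_i)^T\bar{\beta})$ the second summand is at most $\ell_kc_k$ by Assumption \ref{assn:new_combined_all}(c)-(ii), while a union bound using the conditional $\nu$-th moment control of Assumption \ref{assn:new_combined_all}(c)-(i) gives $\max_i|Y_i-g_0(X_i)|=O_P(n^{1/\nu})$. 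Hence $\max_i|e_i|=O_P(n^{1/\nu}\vee\ell_kc_k)$, and with $\|\Sigma\|_2=O(1)$ the first bracket is $O_P((n^{1/\nu}\vee\ell_kc_k)\sqrt{\xi_k^2\log k/n})$, exactly the target rate; the quadratic remainder of Rudelson's inequality is of smaller order by Assumption \ref{assn:new_combined_all}(c)-(iv).

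For the second (estimation) bracket I would use $\hat{\omega}_i-\omega_i=-p_{1:k}(X_i)p_{1:k}(X_i)^T u$ with $u\equiv\hat{\beta}-\bar{\beta}=\hat{Q}^{-1}E_n[\omega]$, expand $\hat{\omega}\hat{\omega}^T-\omega\omega^T$ into two cross terms and one quadratic term, and bound each in operator norm. The key inputs are: $\hat{Q}\to Q$ in operator norm, so $\hat{Q}^{-1}=O_P(1)$ (this uses Assumption \ref{assn:new_combined_all}(c)-(iii)-(iv) and Lemma \ref{lemma:conv_Qast}); the $\ell_2$ and prediction-norm rates $\|u\|_2=O_P(\sqrt{\xi_k^2/n})$ and $\|\hat{Q}^{1/2}u\|_2=O_P(\sqrt{k/n})$ coming from Lemma \ref{lemma:conv_omega2} and boundedness of $\hat{Q}^{-1}$; and the bound $\sup_{\|a\|_2=1}E_n[(a^T\omega)^2]=\|E_n[\omega\omega^T]\|_2=O_P(1)$, which lets the residual be absorbed through Cauchy--Schwarz rather than through $\max_i|e_i|$. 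Combining these via the positive-semidefinite sandwiching $\|E_n[c_i\,p_{1:k}(X_i)p_{1:k}(X_i)^T]\|_2\le\max_i|c_i|\,\|\hat{Q}\|_2$ shows that the estimation bracket is dominated by the sampling bracket under the relative-growth restrictions of Assumption \ref{assn:new_combined_all}(c)-(iv).

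I expect the main obstacle to be precisely this sharp control of the estimation bracket. A naive bound on the cross term $E_n[(p_{1:k}(X)^Tu)\,\omega\,p_{1:k}(X)^T]$ introduces a spurious factor of $\xi_k$ (through $\sup_x|p_{1:k}(x)^Tu|\le\xi_k\|u\|_2$) or of $\sqrt{k}$ (through the prediction norm), either of which would overshoot the stated rate; keeping it within $O_P((n^{1/\nu}\vee\ell_kc_k)\sqrt{\xi_k^2\log k/n})$ requires simultaneously exploiting the prediction-norm convergence of $\hat{\beta}$ and the relative-rate conditions in Assumption \ref{assn:new_combined_all}(c), following the blueprint of the proof of Theorem 4.6 in \citet{belloni2015some}, which establishes exactly this type of operator-norm convergence for the estimated variance matrix. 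Once $\|\hat{\Sigma}-\Sigma\|_2$ is pinned at the target rate and $\lambda_{\min}(\hat{\Sigma})\gtrsim 1$ with probability approaching one, the perturbation inequality of the first paragraph delivers the conclusion.
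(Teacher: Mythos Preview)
Your strategy is essentially the paper's: bound $\|\hat{\Sigma}-\Sigma\|_2$ at the target rate, show $\hat{\Sigma}$ is well-conditioned with probability approaching one, and pass through a matrix perturbation inequality for the inverse square root. The paper's execution is more compressed: it invokes Theorem~4.6 of \citet{belloni2015some} directly for $\|E_n[\hat{\omega}\hat{\omega}^T]-E[\omega\omega^T]\|_2=O_P\bigl((n^{1/\nu}\vee\ell_kc_k)\sqrt{\xi_k^2\log k/n}\bigr)$ rather than re-deriving the sampling and estimation brackets, and for the perturbation step it uses Lemma~A.2 of \citet{belloni2015some} (passing through $\hat{\Sigma}^{-1}-\Sigma^{-1}$) instead of your integral-representation Lipschitz bound; both tools give the same conclusion. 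One small slip: you cite Lemma~\ref{lemma:conv_Qast} for $\hat{Q}\to Q$, but that lemma concerns the Rademacher-weighted average $E_n[\eta\,p_{1:k}(X)p_{1:k}(X)^T]$; the correct input is Lemma~6.2 of \citet{belloni2015some} applied to $\hat{Q}-Q$, which is what the paper uses.
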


\begin{proof}[Proof of Theorem \ref{theorem:main_size}]
First, we are going to show that
$\left\|E_n[\hat{\omega}\hat{\omega}^T]^{-1/2}E_n[\omega]\right\|_{\infty}\leq {cv}$ implies $\theta_0\in {CR}_{\theta}$.
By Assumption \ref{assn:approx_error} for $\mathbf{A}_1$, we have
$$
[\mathbf{A}_1p_{1:k}^T](w_1)\bar{\beta}
\leq 
[\mathbf{A}_1g_0](w_1)+|[\mathbf{A}_1(g_0-p_{1:k}^T\bar{\beta})](w_1)|
\leq
\delta_1(w_1)
$$
for every $w_1\in\mathcal{W}_1$.
By Assumption \ref{assn:approx_error} for $\mathbf{A}_0$, we have
$$
[\mathbf{A}_0p_{1:k}^T](w_0)\beta-\delta_0(w_0)\leq \theta_0(w_0)\leq [\mathbf{A}_0p_{1:k}^T](w_0)\beta+\delta_0(w_0)
$$
for every $w_0\in\mathcal{W}_0$.
Together with $\left\|E_n[\hat{\omega}\hat{\omega}^T]^{-1/2}E_n[\omega]
\right\|_{\infty}\leq {cv}$, we have $\theta_0\in {CR}_{\theta}$. 

The rest of the proof is going to establish
$$
\liminf_{n\rightarrow\infty}\mathbb{P}\left(\left\|E_n[\hat{\omega}\hat{\omega}^T]^{-1/2}E_n[\omega]
\right\|_{\infty}\leq {cv}\right)\geq 1-\alpha.
$$
We now invoke Lemma \ref{lemma:GaussApprox} under Assumptions \ref{assn:new_combined_all} (a) and \ref{assn:new_combined_all} (b). Observe that as the Gaussian random vectors $Z$ and $Z^\ast$ are centered and share a common covariance matrix, we have $\mathbb{P}\left(\left\|Z\right\|_{\infty}\leq t\right)=\mathbb{P}\left(\left\|Z^\ast\right\|_{\infty}\leq t\right)$.
Hence it holds that
\begin{eqnarray*}
&&
\mathbb{P}\left(\left\|E_n[\hat{\omega}\hat{\omega}^T]^{-1/2}E_n[\omega]
\right\|_{\infty}\leq {cv}\right)
\\
&&\geq
\mathbb{P}\left(\left\|E_n[\hat{\omega}\hat{\omega}^T]^{-1/2}E_n[\eta\hat{\omega}]
\right\|_{\infty}\leq {cv}\right)
\\&&\quad
-\sup_{t}\left|\mathbb{P}\left(\left\|Z^\ast\right\|_{\infty}\leq t\right)
-\mathbb{P}\left(\left\|{E[\omega\omega^T]}^{-1/2}E_n[\eta{\omega}]
\right\|_{\infty}\leq t\right)\right|
\\&&\quad
-\sup_{t}\left|\mathbb{P}\left(\left\|{E[\omega\omega^T]}^{-1/2}E_n[\eta{\omega}]
\right\|_{\infty}\leq t\right)
-\mathbb{P}\left(\left\|E_n[\hat{\omega}\hat{\omega}^T]^{-1/2}E_n[\eta\hat{\omega}]
\right\|_{\infty}\leq t\right)\right|
\\&&\quad
-\sup_{t}
\left|\mathbb{P}\left(\left\|Z\right\|_{\infty}\leq t\right)-\mathbb{P}\left(\left\|{E[\omega\omega^T]}^{-1/2}E_n[\omega]
\right\|_{\infty}\leq t\right)\right|
\\&&\quad
-\sup_{t}
\left|\mathbb{P}\left(\left\|{E[\omega\omega^T]}^{-1/2}E_n[\omega]
\right\|_{\infty}\leq t\right)-\mathbb{P}\left(\left\|E_n[\hat{\omega}\hat{\omega}^T]^{-1/2}E_n[\omega]
\right\|_{\infty}\leq t\right)\right|.
\end{eqnarray*}
Following its definition, $\mathbb{P}\left(\left\|E_n[\hat{\omega}\hat{\omega}^T]^{-1/2}E_n[\eta\hat{\omega}]
\right\|_{\infty}\leq {cv}\right)=1-\alpha$.
By Lemma \ref{lemma:GaussApprox}, it suffices to show 
\begin{equation}\label{eq:new_goal1}
\sup_{t}
\left|\mathbb{P}\left(\left\|{E[\omega\omega^T]}^{-1/2}E_n[\omega]
\right\|_{\infty}\leq t\right)-\mathbb{P}\left(\left\|E_n[\hat{\omega}\hat{\omega}^T]^{-1/2}E_n[\omega]
\right\|_{\infty}\leq t\right)\right|=o(1)
\end{equation}
and 
\begin{equation}\label{eq:new_goal2}
\sup_{t}\left|\mathbb{P}\left(\left\|{E[\omega\omega^T]}^{-1/2}E_n[\eta{\omega}]
\right\|_{\infty}\leq t\right)
-\mathbb{P}\left(\left\|E_n[\hat{\omega}\hat{\omega}^T]^{-1/2}E_n[\eta\hat{\omega}]
\right\|_{\infty}\leq t\right)\right|=o(1).
\end{equation}
We can bound the first probability as follows: 
\begingroup
\allowdisplaybreaks
\begin{eqnarray}
&&
\sup_{t}\left|\mathbb{P}\left(\left\|{E[\omega\omega^T]}^{-1/2}E_n[{\omega}]
\right\|_{\infty}\leq t\right)
-\mathbb{P}\left(\left\|E_n[\hat{\omega}\hat{\omega}^T]^{-1/2}E_n[\omega]
\right\|_{\infty}\leq t\right)\right|
\notag\\
&\leq&
\sup_{t}\mathbb{P}\left(\left|\left\|{E[\omega\omega^T]}^{-1/2}E_n[{\omega}]
\right\|_{\infty}-t\right|\leq 1/{(\sqrt{n}\log k)}\right)
\notag\\&&
+\mathbb{P}\left(\left|\left\|{E[\omega\omega^T]}^{-1/2}E_n[{\omega}]
\right\|_{\infty}-\left\|E_n[\hat{\omega}\hat{\omega}^T]^{-1/2}E_n[\omega]
\right\|_{\infty}\right|>1/{(\sqrt{n}\log k)}\right)
\notag\\
&\leq&
\sup_{t}\mathbb{P}\left(\left|\left\|Z\right\|_{\infty}-t\right|\leq 1/{(\sqrt{n}\log k)}\right)
\notag\\&&+
2\sup_{t}\left|\mathbb{P}\left(\left\|Z\right\|_{\infty}\leq t\right)
-\mathbb{P}\left(\left\|E_n[{E[\omega\omega^T]}^{-1/2}{\omega}]
\right\|_{\infty}\leq t\right)\right|
\notag\\&&
+\mathbb{P}\left(\left|\left\|{E[\omega\omega^T]}^{-1/2}E_n[{\omega}]
\right\|_{\infty}-\left\|E_n[\hat{\omega}\hat{\omega}^T]^{-1/2}E_n[\omega]
\right\|_{\infty}\right|>1/{(\sqrt{n}\log k)}\right)
\notag\\
&\leq&
o(1)
+\mathbb{P}\left(\left|\left\|{E[\omega\omega^T]}^{-1/2}E_n[{\omega}]
\right\|_{\infty}-\left\|E_n[\hat{\omega}\hat{\omega}^T]^{-1/2}E_n[\omega]
\right\|_{\infty}\right|>1/{(\sqrt{n}\log k)}\right),\label{eq:anticonent}
\end{eqnarray}
\endgroup
where the last inequality uses Lemma \ref{lemma:GaussApprox} and an anti-concentration argument, which implies that 
$$
\sup_{t}\mathbb{P}\left(\left|\left\|Z\right\|_{\infty}-t\right|\leq 1/{(\sqrt{n}\log k)}\right)=o(1).
$$
To see how the anti-concentration argument works, observe that
\begin{align*}
&\sup_{t}\mathbb{P}\left(\left|\left\|Z\right\|_{\infty}-t\right|\leq 1/{(\sqrt{n}\log k)}\right)\\
\le& 
\sup_{z\in \mathbb R^k}\mathbb{P}\left(z<Z\leq z+1/{(\sqrt{n}\log k)}\right)+ \sup_{z\in \mathbb R^k}\mathbb{P}\left(z-1/{(\sqrt{n}\log k)}\le Z\le  z\right).
\end{align*}
Then the Nazarov's anti-concentration inequality (Lemma A.1 in \cite{CCK2017}) implies that the first term on the right hand side
\begin{align*}
&\sup_{z\in \mathbb R^k}\mathbb{P}\left(z<Z\leq z+1/{(\sqrt{n}\log k)}\right)
\le  C(n\log k)^{-1/2}=o(1),
\end{align*}
where $C$ is a constant that depends only on $b$ from Assumption \ref{assn:new_combined_all} (b). The second term on the right hand side above follows a similar argument.
Now, for the remaining term in Equation (\ref{eq:anticonent}), note that  
\begin{align*}
\left|\left\|{E[\omega\omega^T]}^{-1/2}E_n[{\omega}]
\right\|_{\infty}-\left\|E_n[\hat{\omega}\hat{\omega}^T]^{-1/2}E_n[\omega]
\right\|_{\infty}\right|
\leq&
\left\|(E_n[\hat{\omega}\hat{\omega}^T]^{-1/2}-{E[\omega\omega^T]}^{-1/2})E_n[\omega]
\right\|_{\infty}
\\
\leq&
\left\|E_n[\hat{\omega}\hat{\omega}^T]^{-1/2}-{E[\omega\omega^T]}^{-1/2} \right\|_2\left\|E_n[\omega]
\right\|_2\\
=& O_{P}\left((n^{1/\nu}\vee\ell_kc_k)\sqrt{\frac{\xi_k^4 \log k}{n^2}}\right)=o_{P}(1)
\end{align*}
follows from Lemma \ref{lemma:conv_omega2}, Lemma \ref{lemma:cov_est}, and Assumption \ref{assn:new_combined_all} (c)-(iv).  This verifies Equation \eqref{eq:new_goal1}. 

We next show Equation \eqref{eq:new_goal2}. 
In a similar way to Equation \eqref{eq:anticonent}, we can bound 
\begin{eqnarray*}
&&
\sup_{t}\left|\mathbb{P}\left(\left\|{E[\omega\omega^T]}^{-1/2}E_n[\eta{\omega}]
\right\|_{\infty}\leq t\right)
-\mathbb{P}\left(\left\|E_n[\hat{\omega}\hat{\omega}^T]^{-1/2}E_n[\eta\hat{\omega}]
\right\|_{\infty}\leq t\right)\right|
\\
&\leq&
o(1)
+\mathbb{P}\left(\left|\left\|{E[\omega\omega^T]}^{-1/2}E_n[\eta{\omega}]
\right\|_{\infty}-\left\|E_n[\hat{\omega}\hat{\omega}^T]^{-1/2}E_n[\eta\hat{\omega}]
\right\|_{\infty}\right|>1/{(\sqrt{n}\log k)}\right). 
\end{eqnarray*}
Note that
\begin{eqnarray*}
&&
\left|\left\|{E[\omega\omega^T]}^{-1/2}E_n[\eta{\omega}]
\right\|_{\infty}-\left\|E_n[\hat{\omega}\hat{\omega}^T]^{-1/2}E_n[\eta\hat{\omega}]
\right\|_{\infty}\right|
\\
&\leq&
\left\|(E_n[\hat{\omega}\hat{\omega}^T]^{-1/2}-{E[\omega\omega^T]}^{-1/2})E_n[\eta{\omega}]\right\|_{\infty}
+\left\|(E_n[\hat{\omega}\hat{\omega}^T]^{-1/2}-{E[\omega\omega^T]}^{-1/2})E_n[{\omega}]\right\|_{\infty}
\\&&+\left\|(E_n[\hat{\omega}\hat{\omega}^T]^{-1/2}-{E[\omega\omega^T]}^{-1/2})E_n[\eta(\hat{\omega}-{\omega})]
\right\|_{\infty}+\left\|{E[\omega\omega^T]}^{-1/2}E_n[\eta(\hat{\omega}-{\omega})]
\right\|_{\infty}\\
&\leq&
\left\|E_n[\hat{\omega}\hat{\omega}^T]^{-1/2}-{E[\omega\omega^T]}^{-1/2}\right\|_2\left\|E_n[\eta{\omega}]\right\|_2
+\left\|E_n[\hat{\omega}\hat{\omega}^T]^{-1/2}-{E[\omega\omega^T]}^{-1/2}\right\|_2\left\|E_n[{\omega}]\right\|_2
\\&&+\left(\left\|(E_n[\hat{\omega}\hat{\omega}^T]^{-1/2}-{E[\omega\omega^T]}^{-1/2})\right\|_2+\left\|{E[\omega\omega^T]}^{-1/2}\right\|_2\right)\left\|E_n[\eta(\hat{\omega}-{\omega})]
\right\|_2\\
&\leq& O_{P}\left((n^{1/\nu}\vee\ell_kc_k)\sqrt{\frac{\xi_k^4 \log k}{n^2}}\right)+O_{P}(1)\left\|E_n[\eta(\hat{\omega}-{\omega})]
\right\|_2\\
&=&
o(1)
\end{eqnarray*}
follows from Lemma \ref{lemma:conv_omega2}, Lemma \ref{lemma:conv_Qast}, Lemma \ref{lemma:cov_est}, and the fact that with probability $1-o(1)$,
\begin{eqnarray*}
\left\|E_n[\eta(\hat{\omega}-{\omega})]\right\|_2
&=&
\left\|(E_n[\eta p_{1:k}(X)p_{1:k}(X)^T])E_n\left[p_{1:k}(X)p_{1:k}(X)^T\right]^{-1}E_n[\omega]\right\|_2\\
&=&
\left\|E_n[\eta p_{1:k}(X)p_{1:k}(X)^T]\right\|_2\|E_n\left[p_{1:k}(X)p_{1:k}(X)^T\right]^{-1}\|_2\left\|E_n[\omega]\right\|_2\\
&=&
O\left( \sqrt{\frac{\xi_k^4 \log k}{n^2}}\right)\\
&=&
o(1).
\end{eqnarray*}
Note that we have used $\|E_n\left[p_{1:k}(X)p_{1:k}(X)^T\right]^{-1}\|_2 =O_{P}(1)$. To see this, observe that
$\|E_n\left[p_{1:k}(X)p_{1:k}(X)^T\right]-E[p_{1:k}(X)p_{1:k}(X)^T]\|=o_{P}(1)$ following Lemma 6.2 in \cite{belloni2015some} under 
Assumption \ref{assn:new_combined_all} (c)-(iv) . Therefore, all eigenvalues of $E_n\left[p_{1:k}(X)p_{1:k}(X)^T\right]$ are bounded away from zero with probability approaching one following the same argument in the proof of Lemma \ref{lemma:cov_est}. This verifies Equation (\ref{eq:new_goal2}). 
\end{proof}

\section{Proofs for the Auxiliary Lemmas}\label{sec:proofs_auxiliary_lemmas}
This Section contains the proofs of the lemmas in Appendix \ref{sec:theorem:main_size}.
\subsection{Proof of Lemma \ref{lemma:GaussApprox}}
\begin{proof}
Observe that $E[\omega]=0$. The first uniform convergence in probability follows from Proposition 2.1 in \cite{chernozhukov2017} under their Conditions (M.1), (M.2), and (E.2), that are implied by our Assumption \ref{assn:new_combined_all} (b). 
The second follows from the same proposition in \cite{chernozhukov2017} -- note that Conditions (M.1), (M.2), and (E.2) and the independence between $\eta$ and the data imply
$E[( \eta(E[\omega\omega^T]^{-1/2})_j\omega)^2]\ge b$,
$E[| \eta(E[\omega\omega^T]^{-1/2})_j\omega|^{2+\kappa}]\le B_n^\kappa$,  and 
$E[ \|\eta E[\omega\omega^T]^{-1/2}\omega\|_\infty^q]\le B_n^{q}$.
Finally, the statement on covariance equality is implied by the first two statements,  Proposition 2.1 in \cite{chernozhukov2017} and the equality $E[E[\omega\omega^T]^{-1/2}\omega(E[\omega\omega^T]^{-1/2}\omega)^T]=E[\eta^2E[\omega\omega^T]^{-1/2}\omega(E[\omega\omega^T]^{-1/2}\omega)^T]$.
\end{proof}
\subsection{Proof of Lemma \ref{lemma:conv_omega2}}
\begin{proof}
By Jensen's inequality, we have
\begingroup
\allowdisplaybreaks
\begin{eqnarray*}
E[\left\|E_n[{\omega}]\right\|_2]
&=&
E[(E_n[{\omega}]^TE_n[{\omega}])^{1/2}]\\
&\leq& 
(E[E_n[{\omega}]^TE_n[{\omega}]])^{1/2}\\
&=&
\sqrt{\frac{1}{n}}E[{\omega}^T{\omega}]^{1/2}\\
E[\left\|E_n[(\eta+1){\omega}]\right\|_2]
&=&
E[(E_n[(\eta+1){\omega}]^TE_n[(\eta+1){\omega}])^{1/2}]\\
&\leq& 
(E[E_n[(\eta+1){\omega}]^TE_n[(\eta+1){\omega}]])^{1/2}\\
&=&
\sqrt{\frac{1}{n}}(E[(\eta+1)^2{\omega}^T{\omega}])^{1/2}\\
&=&
\sqrt{\frac{1}{n}}(E[{\omega}^T{\omega}])^{1/2}.
\end{eqnarray*}
\endgroup
Note that we used the independence between $\eta$ and the data.
We can further bound 
\begin{eqnarray*}
E[{\omega}^T{\omega}]^{1/2}
&=&
\left(E[\|p_{1:k}(X)\|_2^2(Y-p_{1:k}(X)^TQ^{-1}E\left[p_{1:k}(X)Y\right])^2]\right)^{1/2}\\
&=&
\left(E[\|p_{1:k}(X)\|_2^2(Y-p_{1:k}(X)^T\bar{\beta})^2]\right)^{1/2}\\
&\leq&
\xi_k\left(E[(Y-p_{1:k}(X)^T\bar{\beta})^2]\right)^{1/2}\\
&\leq&
\xi_k\left(E[Y^2]\right)^{1/2}.
\end{eqnarray*}
Therefore, the statement of the lemma follows.
\end{proof}


\subsection{Proof of Lemma \ref{lemma:conv_Qast}}
\begin{proof}
By the second statement of Lemma 6.1 in \cite{belloni2015some}, we have 
\begin{eqnarray*}
E[\left\|E_n[\eta p_{1:k}(X)p_{1:k}(X)^T]\right\|_2\mid \{Y_i,X_i\}]
&=&O\left(
\sqrt{\frac{\log k}{n}}\left\|\left(E_n[(p_{1:k}(X)p_{1:k}(X)^T)^2]\right)^{1/2}\right\|_2\right).
\end{eqnarray*}
We can further bound the norm part by
\begin{eqnarray*}
\left\|\left(E_n[(p_{1:k}(X)p_{1:k}(X)^T)^2]\right)^{1/2}\right\|_2
&=&
\left\|\left(E_n[(p_{1:k}(X)(p_{1:k}(X)^Tp_{1:k}(X))p_{1:k}(X)^T]\right)^{1/2}\right\|_2\\
&\leq&
\xi_k\|E_n\left[p_{1:k}(X)p_{1:k}(X)^T\right]^{1/2}\|_2.
\end{eqnarray*}
By \citet[Theorem 4.6]{belloni2015some}, we have $\|E_n\left[p_{1:k}(X)p_{1:k}(X)^T\right]^{1/2}\|_2=O_{P}(1)$ under Assumption \ref{assn:new_combined_all} (c).
\end{proof}

\subsection{Proof of Lemma \ref{lemma:cov_est}}
\begin{proof}
By Lemma A.2 of \cite{belloni2015some}, we can bound 
$$
\left\|E_n[\hat{\omega}\hat{\omega}^T]^{-1/2}-{E[\omega\omega^T]}^{-1/2}\right\|_2
\leq
\left\|E_n[\hat{\omega}\hat{\omega}^T]^{-1}-{E[\omega\omega^T]}^{-1}\right\|_2\left\|{E[\omega\omega^T]}\right\|_2^{1/2}.
$$
Observe that by Jensen's inequality, $\{E[\max_{1\le i \le n}|Y_i-g_0(X_i)|^{2}]\}^{1/2}=O(n^{1/\nu})$ under Assumption \ref{assn:new_combined_all} (c)-(i) Applying Theorem 4.6 in \cite{belloni2015some}, we have 
$$
\left\|E_n[\hat{\omega}\hat{\omega}^T]-{E[\omega\omega^T]}\right\|_2=O_{P}\left((n^{1/\nu}\vee \ell_k c_k )\sqrt{\frac{\xi_k^2\log k}{n}}\right)
$$
under Assumptions \ref{assn:new_combined_all} (a) and \ref{assn:new_combined_all} (c).
Notice that $\left\|{E[\omega\omega^T]}^{-1}\right\|_2=O(1)$ and $\left\|{E[\omega\omega^T]}\right\|_2=O(1)$. We now claim that $\|E_n[\hat{\omega}\hat{\omega}^T]^{-1}\|_2=O_{P}(1)$. In fact, all eigenvalues of $E_n[\hat{\omega}\hat{\omega}^T]$ are bounded away from zero. To see this, assume without loss of generality  $E[\omega\omega^T] =I$. Suppose that at least one of eigenvalues of $E_n[\hat{\omega}\hat{\omega}^T]$ is strictly smaller than $1/2$, then there exists a vector $a\in\mathbb R^k$ on the unit sphere such that $a'E_n[\hat{\omega}\hat{\omega}^T] a<1/2$ and thus $\|E_n[\hat{\omega}\hat{\omega}^T]-E[\omega\omega^T]\|_2\ge |a^T(E_n[\hat{\omega}\hat{\omega}^T]-E[\omega\omega^T])a|=|a^TE_n[\hat{\omega}\hat{\omega}^T] a-1|>1/2$, a contradiction. This implies that all eigenvalues of $E_n[\hat{\omega}\hat{\omega}^T]^{-1}$ are bounded from above and thus the claim. Hence, we have
$$
\left\|E_n[\hat{\omega}\hat{\omega}^T]^{-1}-{E[\omega\omega^T]}^{-1}\right\|_2 \le \|E_n[\hat{\omega}\hat{\omega}^T]^{-1}\|_2\left\|E_n[\hat{\omega}\hat{\omega}^T]-E[\omega\omega^T]\right\|_2\|E[\omega\omega^T]^{-1}\|_2,
$$
which, combined with the above bound, yields
$$
\left\|E_n[\hat{\omega}\hat{\omega}^T]^{-1/2}-{E[\omega\omega^T]}^{-1/2}\right\|_2=O_{P}\left((n^{1/\nu}\vee \ell_k c_k )\sqrt{\frac{\xi_k^2\log k}{n}}\right).
$$
Therefore, the statement of the lemma follows.
\end{proof}

\section{Simulation Analysis}\label{sec:simulation}
In this section, we use Monte Carlo simulations to check whether the proposed method works as the theory claims.
Consider the following data generating process.
\begin{align*}
Y(t,x,u) &= 0.5 t - 0.1 x + u
\\
T(x) &= 
\begin{cases}
0.5 x & \text{ if } x < 0
\\
0     & \text{ if } x \geq 0
\end{cases}
\end{align*}
We design this policy schedule $T$ to mimic the actual policy schedule that we use in our empirical analysis in Section \ref{sec:empirical_analysis}.
Allowing for the endogeneity of $X$, we generate $(X,U)$ from the bivariate normal distribution with $E[X]=E[U]=0$, $Var(X)=1.00$, $Cov(X,U)=0.10$ and $Var(U)=0.10$.
In this data generating process, the true partial effect is $h^1(0)=0.5$. 
We experiment with three different sample sizes $n=1000$, $2000$ and $4000$.
We implement the algorithm in Section \ref{sec:empirical_analysis} with the kink location at $0$ and the subsample with $X\in[-1,1]$.
The number of multiplier bootstrap iterations is set to $M=2500$.
We experiment with $k \in \{4,8,12\}$ and set $\delta_0 = \delta_1 = 0.01$ throughout.
Each set of simulations is based on 10,000 Monte Carlo iterations.

Table \ref{tab:lengths} summarizes average lengths and coverage frequencies of the 95\% confidence intervals under alternative shape restrictions across the three different sample sizes, $n=1000$, $2000$ and $4000$.
First, note that the lengths decrease as the sample size $n$ increases for each sieve dimension $k$ and for each set of shape restrictions.
Second, observe that the coverage frequencies are quite close to the nominal probability 95\% for each sieve dimension $k$ and for each set of shape restrictions.
Third, when the sieve dimension takes $k \in \{8,12\}$, the shape restriction \eqref{eq:rkd_shape_restriction_2} contributes to shrinking the average lengths without sacrificing the coverage frequencies.
These results imply that shape restrictions contribute to more informative statistical inference.

\begin{table}
	\centering
		\begin{tabular}{crccccccc}
		\hline\hline
		  Sieve & & \multicolumn{3}{c}{Average Length} && \multicolumn{3}{c}{Coverage}\\
			\cline{3-5}\cline{7-9}
			Dimension & Sample Size $n$ & $1000$ & $2000$ & $4000$ && $1000$ & $2000$ & $4000$\\
			\hline
			 \multirow{2}{*}{$k$=4} & No Shape Restriction &
			0.656 & 0.470 & 0.338 && 0.948 & 0.947 & 0.949\\
			& Shape Restrictions \eqref{eq:rkd_shape_restriction_2} &
			0.647 & 0.470 & 0.338 && 0.948 & 0.947 & 0.949
			\\
			\\
			 \multirow{2}{*}{$k$=8} & No Shape Restriction &
			6.039 & 4.283 & 3.037 && 0.950 & 0.950 & 0.948\\
			& Shape Restrictions \eqref{eq:rkd_shape_restriction_2} &
			3.519 & 2.646 & 2.020 && 0.950 & 0.950 & 0.948\\
			\\
			 \multirow{2}{*}{$k$=12} & No Shape Restriction &
			20.675 & 14.679 & 10.406 && 0.942 & 0.941 & 0.942\\
			& Shape Restrictions \eqref{eq:rkd_shape_restriction_2} &
			10.819 &  7.879 &  5.690 && 0.942 & 0.941 & 0.942\\
		\hline\hline
		\end{tabular}
	\caption{Average lengths and coverage frequencies of the 95\% confidence intervals under alternative shape restrictions. All the results are based on 10,000 Monte Carlo iterations.}
	\label{tab:lengths}
\end{table}

\newpage
\bibliography{mybib}
\end{document}